\newtheorem{theorem}{Theorem}
\newtheorem{lemma}[theorem]{Lemma}
\newtheorem{proposition}[theorem]{Proposition}
\newtheorem{corollary}[theorem]{Corollary}
\theoremstyle{definition}
\newtheorem{definition}[theorem]{Definition}
\theoremstyle{remark}
\newtheorem{remark}[theorem]{Remark}
\DeclareMathOperator{\sinc}{sinc}
\newcommand{\R}{{\mathbb R}}
\newcommand{\Z}{{\mathbb Z}}
\newcommand{\C}{{\mathbb C}}
\begin{document}

\title[Eigenfunction expansions]{Eigenfunction expansions for the Schr\"odinger equation with an inverse-square potential}

\author{A.G.~Smirnov}
\address{I.~E.~Tamm Theory Department, P.~N.~Lebedev
Physical Institute, Leninsky prospect 53, Moscow 119991, Russia}
\email{smirnov@lpi.ru}
\keywords{Schr\"odinger equation, inverse-square potential, self-adjoint extension, eigenfunction expansion, Titchmarsh-Weyl $m$-function}
\dedicatory{Dedicated to Professor I.V.~Tyutin on the occasion of his 75th birthday}

\begin{abstract}
We consider the one-dimensional Schr\"odinger equation $-f''+q_\kappa f = Ef$ on the positive half-axis with the potential $q_\kappa(r)=(\kappa^2-1/4)r^{-2}$. For each complex number $\vartheta$, we construct a solution $u^\kappa_\vartheta(E)$ of this equation that is analytic in $\kappa$ in a complex neighborhood of the interval $(-1,1)$ and, in particular, at the ``singular'' point $\kappa = 0$. For $-1<\kappa<1$ and real $\vartheta$, the solutions $u^\kappa_\vartheta(E)$ determine a unitary eigenfunction expansion operator $U_{\kappa,\vartheta}\colon L_2(0,\infty)\to L_2(\R,\mathcal V_{\kappa,\vartheta})$, where $\mathcal V_{\kappa,\vartheta}$ is a positive measure on $\R$. We show that every self-adjoint realization of the formal differential expression $-\partial^2_r + q_\kappa(r)$ for the Hamiltonian is diagonalized by the operator $U_{\kappa,\vartheta}$ for some $\vartheta\in\R$. Using suitable singular Titchmarsh-Weyl $m$-functions, we explicitly find the measures $\mathcal V_{\kappa,\vartheta}$ and prove their continuity in $\kappa$ and $\vartheta$.
\end{abstract}

\maketitle

\section{Introduction}
\label{intro}

This paper is devoted to eigenfunction expansions connected with the one-dimensional Schr\"odinger equation
\begin{equation}\label{eq1}
-\partial^2_r f(r) + \frac{\kappa^2-1/4}{r^2} f(r) = Ef(r),\quad r>0,
\end{equation}
where $\kappa$ and $E$ are real parameters. It is easy to see that the function $f(r) = r^{1/2} J_\kappa(E^{1/2}r)$, where $J_\kappa$ is the Bessel function of the first kind of order $\kappa$, is a solution of~(\ref{eq1}) for every $E>0$ and $\kappa\in\R$ (this follows immediately from the fact that $J_\kappa$ satisfies the Bessel equation). These solutions can be used to expand square-integrable functions on the positive half-axis $\R_+ = (0,\infty)$. More precisely, given $\kappa>-1$ and a square-integrable complex function $\psi$ on $\R_+$ that vanishes for large $r$, we can define the function $\hat \psi$ on $\R_+$ by setting
\begin{equation}\label{eq2}
\hat\psi(E) = \frac{1}{\sqrt{2}}\int_0^\infty \sqrt{r}J_{\kappa}(\sqrt{E} r)\psi(r)\,dr,\quad E>0.
\end{equation}
The map $\psi\to\hat\psi$ up to a change of variables then coincides with the well-known Hankel transformation~\cite{Hankel} and induces a uniquely determined unitary operator in $L_2(\R_+)$. Since the development of a general theory of singular Sturm-Liouville problems by Weyl~\cite{Weyl}, this transformation has been used by many authors to illustrate various approaches to eigenfunction expansions for this kind of problem~\cite{Weyl1,Titchmarsh,Naimark,GesztesyZinchenko,Fulton,KST}.

If $\kappa\geq 1$, then transformation~(\ref{eq2}) is the unique eigenfunction expansion associated with~(\ref{eq1}) up to normalization of eigenfunctions. On the other hand, for $|\kappa|<1$, a one-parametric family of different expansions can be constructed using solutions of~(\ref{eq1}) (see~Chap.~4 in~\cite{Titchmarsh}). The reason for this ambiguity is that the formal differential expression for the Hamiltonian
\begin{equation}\label{eq3}
-\partial^2_r + \frac{\kappa^2-1/4}{r^2}
\end{equation}
does not uniquely determine the quantum-mechanical problem for $|\kappa|<1$ and admits various self-adjoint realizations in $L_2(\R_+)$ that yield different eigenfunction expansions. In~\cite{GTV2010}, all self-adjoint realizations of~(\ref{eq3}) were characterized using suitable asymptotic boundary conditions and the corresponding eigenfunction expansions were explicitly found.

In both~\cite{Titchmarsh} and~\cite{GTV2010}, the cases $0<|\kappa|<1$ and $\kappa=0$ were treated separately and eigenfunction expansions for $\kappa = 0$ could not be obtained from those for $0<|\kappa|<1$ by taking the limit $\kappa\to 0$. This situation is not quite satisfactory from the physical standpoint. In particular, self-adjoint operators associated with~(\ref{eq3}) can be used to construct self-adjoint realizations of Aharonov-Bohm Hamiltonian~\cite{Smirnov2015}, in which case zero and nonzero $\kappa$ correspond to integer and noninteger values of the dimensionless magnetic flux through the solenoid. Hence, the existence of a well-defined limit $\kappa\to 0$ is necessary to ensure the continuous transition between integer and noninteger values of the flux in the Aharonov-Bohm model. Here, we propose a parametrization of self-adjoint realizations of~(\ref{eq3}) and corresponding eigenfunction expansions that is continuous in $\kappa$ on the interval $(-1,1)$ (and, in particular, at $\kappa=0$).

We now formulate our main results. Let $\lambda$ denote the Lebesgue measure on $\R$ and $C_0^\infty(\R_+)$ be the space of all smooth functions on $\R_+$ with compact support. Given a $\lambda$-a.e.\footnote{Throughout the paper, a.e. means either ``almost every'' or ``almost everywhere.''} defined function $f$ on $\R_+$, we let $[f]$ denote the equivalence class of $f$ with respect to the Lebesgue measure on $\R_+$ (i.e., the restriction of the measure $\lambda$ to $\R_+$). For every $\kappa\in\R$, differential expression~(\ref{eq3}) naturally determines the operator $\check h_\kappa$ in $L_2(\R_+)$ whose domain $D_{\check h_\kappa}$ consists of all elements $[f]$ with $f\in C_0^\infty(\R_+)$:
\begin{equation}\label{checkhkappa}
\begin{split}
& D_{\check h_\kappa} = \left\{ [f] : f\in C_0^\infty(\R_+)\right\}, \\
& \check h_\kappa[f] = [-f''+q_\kappa f],\quad f\in C_0^\infty(\R_+).
\end{split}
\end{equation}
Here, $q_\kappa$ denotes the potential term in~(\ref{eq3}),
\begin{equation}\label{qkappa}
q_\kappa(r) = \frac{\kappa^2-1/4}{r^2},\quad r\in \R_+.
\end{equation}
The operator $\check h_\kappa$ is obviously symmetric and hence closable. The closure of $\check h_\kappa$ is denoted by $h_\kappa$,
\begin{equation}\label{hkappadef}
h_\kappa = \overline{\check h_\kappa}.
\end{equation}
The self-adjoint extensions of $h_\kappa$ (or, equivalently, of $\check h_\kappa$) can be naturally interpreted as self-adjoint realizations of formal expression~(\ref{eq3}) (cf.~Remark~\ref{r_realiz} below).

For any $z,\kappa\in\C$, we define the function $u^\kappa(z)$ on $\R_+$ by the relation\footnote{For brevity, we let $u^\kappa(z|r)$ denote the value of the function $u^\kappa(z)$ at a point $r$: $u^\kappa(z|r)=(u^\kappa(z))(r)$.}
\begin{equation}\label{ukappa}
u^\kappa(z|r) = r^{1/2+\kappa}\mathcal X_\kappa(r^2 z),\quad r\in\R_+,
\end{equation}
where the entire function $\mathcal X_\kappa$ is given by
\begin{equation}\label{Xkappa}
\mathcal X_\kappa(\zeta) = \frac{1}{2^\kappa}\sum_{n=0}^\infty \frac{(-1)^n\zeta^n}{\Gamma(\kappa+n+1)n!2^{2n}},\quad \zeta\in\C.
\end{equation}
The function $\mathcal X_\kappa$ is closely related to Bessel functions: for $z\neq 0$, we have
\begin{equation}\label{bessel}
\mathcal X_\kappa(\zeta) = \zeta^{-\kappa/2}J_\kappa(\zeta^{1/2}).
\end{equation}
Because $J_\kappa$ satisfies the Bessel equation, it follows that
\begin{equation}\label{eqf}
-\partial^2_r u^{\pm\kappa}(z|r) +  q_\kappa(r)u^{\pm\kappa}(z|r) = zu^{\pm\kappa}(z|r),\quad r\in\R_+,
\end{equation}
for every $\kappa\in\C$ and $z\neq 0$.\footnote{Here and hereafter, we assume that the function $q_\kappa$ on $\R_+$ is defined by~(\ref{qkappa}) for all $\kappa\in\C$.} By continuity, this also holds for $z=0$. In particular, $u^{\pm\kappa}(E)$ are solutions of spectral problem~(\ref{eq1}) for every $\kappa,E\in \R$.

Given a positive Borel measure $\sigma$ on $\R$ and a $\sigma$-measurable complex function $g$, we let $\mathcal T^\sigma_g$ denote the operator of multiplication by $g$ in $L_2(\R,\sigma)$.\footnote{More precisely, $\mathcal T^\sigma_g$ is the operator in $L_2(\R,\sigma)$ whose graph consists of all pairs $(\varphi_1,\varphi_2)$ such that $\varphi_1,\varphi_2\in L_2(\R,\sigma)$ and $\varphi_2(E) = g(E)\varphi_1(E)$ for $\sigma$-a.e. $E$.} If $g$ is real, then $\mathcal T^\sigma_g$ is self-adjoint. For $\kappa>-1$, we define the positive Radon measure\footnote{We recall that a Borel measure $\sigma$ on $\R$ is called a Radon measure on $\R$ if $\sigma(K)<\infty$ for every compact set $K\subset \R$.} $\mathcal V_\kappa$ on $\R$ by the relation
\begin{equation}\label{measVkappa}
d\mathcal V_\kappa(E) = \frac{1}{2}\Theta(E) E^{\kappa}\, dE,
\end{equation}
where $\Theta$ is the Heaviside function, i.e., $\Theta(E)=1$ for $E\geq 0$ and $\Theta(E)=0$ for $E<0$. Let $L_2^c(\R_+)$ denote the subspace of $L_2(\R_+)$ consisting of all its elements vanishing $\lambda$-a.e. outside some compact subset of $\R_+$.

It is well known (see, e.g., \cite{Naimark,GesztesyZinchenko,GTV2010}) that the operator $h_\kappa$ is self-adjoint and can be diagonalized by Hankel transformation~(\ref{eq2}) for $\kappa\geq 1$.
In terms of functions $u^\kappa(z)$, this result can be formulated as follows.

\begin{theorem}\label{leig1}
Let $\kappa>-1$ and the measure $\mathcal V_\kappa$ on $\R$ be defined by~$(\ref{measVkappa})$. Then
there is a unique unitary operator $U_\kappa\colon L_2(\R_+)\to L_2(\R,\mathcal V_\kappa)$ such that
\begin{equation}\nonumber
(U_\kappa\psi)(E) = \int_0^\infty u^{\kappa}(E|r)\psi(r)\,dr,\quad \psi\in L_2^c(\R_+),
\end{equation}
for $\mathcal V_\kappa$-a.e. $E$. The operator $U^{-1}_\kappa \mathcal T^{\mathcal V_\kappa}_\iota U_\kappa$, where $\iota$ is the identity function on $\R$ (i.e., $\iota(E) = E$ for all $E\in \R$), is a self-adjoint extension of $h_\kappa$ that coincides with $h_\kappa$ for $\kappa\geq 1$.
\end{theorem}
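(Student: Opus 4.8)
The plan is to reduce both assertions to the classical Plancherel (inversion) theorem for the Hankel transform, together with the fact, recorded in~(\ref{eqf}), that $u^\kappa(E|r)$ solves the spectral equation. First I would record the explicit form of the kernel: combining~(\ref{ukappa}) and~(\ref{bessel}) gives, for $E>0$,
$$u^\kappa(E|r) = r^{1/2}E^{-\kappa/2}J_\kappa(\sqrt E\,r),$$
so that for $\psi\in L_2^c(\R_+)$ the defining integral converges absolutely: the support of $\psi$ is a compact subinterval of $\R_+$ on which the kernel is bounded, and there $\psi\in L_1$. This ensures $U_\kappa\psi$ is well defined pointwise on the dense subspace $L_2^c(\R_+)$.

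Next I would make the substitution $E=s^2$ and set $g(r)=r^{-1/2}\psi(r)$. A direct computation then yields $(U_\kappa\psi)(s^2)=s^{-\kappa}(\mathscr H_\kappa g)(s)$, where $\mathscr H_\kappa$ is the standard Hankel transform of order $\kappa$, $(\mathscr H_\kappa g)(s)=\int_0^\infty J_\kappa(sr)g(r)\,r\,dr$, acting on $L_2((0,\infty),r\,dr)$. Transporting the measure~(\ref{measVkappa}) through $E=s^2$ turns $d\mathcal V_\kappa$ into $s^{2\kappa+1}\,ds$, and the factors $s^{-\kappa}$ and $s^{2\kappa+1}$ combine to give $\|U_\kappa\psi\|^2_{L_2(\R,\mathcal V_\kappa)}=\|\mathscr H_\kappa g\|^2_{L_2(s\,ds)}$. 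Since $\psi\mapsto g$ is a unitary map of $L_2(\R_+)$ onto $L_2((0,\infty),r\,dr)$ and $\mathscr H_\kappa$ is unitary (in fact an involution) on the latter for $\kappa>-1$, the operator $U_\kappa$ is isometric on $L_2^c(\R_+)$ and extends uniquely to a unitary operator on all of $L_2(\R_+)$. Uniqueness of $U_\kappa$ with the stated integral representation follows because $L_2^c(\R_+)$ is dense and a bounded operator is determined by its restriction to a dense subspace.

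For the second assertion I would first note that $\mathcal T^{\mathcal V_\kappa}_\iota$ is self-adjoint because $\iota$ is real, so its unitary conjugate $A:=U_\kappa^{-1}\mathcal T^{\mathcal V_\kappa}_\iota U_\kappa$ is self-adjoint and in particular closed. It then remains to prove $A\supseteq h_\kappa$, and since $A$ is closed and $h_\kappa=\overline{\check h_\kappa}$, it suffices to show $A\supseteq\check h_\kappa$, i.e. that $U_\kappa$ intertwines $\check h_\kappa$ with multiplication by $\iota$. For $f\in C_0^\infty(\R_+)$ I would integrate by parts twice in $\int_0^\infty u^\kappa(E|r)\bigl(-f''(r)+q_\kappa(r)f(r)\bigr)\,dr$; the boundary terms vanish because $f$ has compact support in $\R_+$, and applying~(\ref{eqf}) to transfer the differential operator onto the kernel gives $U_\kappa\check h_\kappa[f]=\iota\cdot(U_\kappa[f])=\mathcal T^{\mathcal V_\kappa}_\iota U_\kappa[f]$. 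The right-hand side is square-integrable since $-f''+q_\kappa f\in L_2^c(\R_+)$, so $U_\kappa[f]$ lies in the domain of $\mathcal T^{\mathcal V_\kappa}_\iota$; hence $[f]\in D_A$ and $A[f]=\check h_\kappa[f]$. This yields $A\supseteq\check h_\kappa$ and therefore $A\supseteq h_\kappa$. Finally, for $\kappa\geq1$ the operator $h_\kappa$ is already self-adjoint (the well-known limit-point property recalled just before the theorem), so the self-adjoint extension $A$ must coincide with it.

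I expect the main obstacle to be the first step: carefully justifying the passage from the formal integral kernel to the unitary Hankel transform, keeping track of the two distinct normalizations and the change of variables $E=s^2$, and handling the fact that the integral defining $U_\kappa$ converges only on $L_2^c(\R_+)$, so that $U_\kappa$ itself is obtained by continuous extension. The intertwining and self-adjointness steps are routine by comparison once the boundary terms in the integration by parts are seen to vanish.
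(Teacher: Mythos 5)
Your proof is correct, but it follows a genuinely different route from the paper's. You reduce unitarity to the classical Hankel--Plancherel theorem (via the substitution $E=s^2$, $g=r^{-1/2}\psi$, and the transport of $d\mathcal V_\kappa$ into $s^{2\kappa+1}\,ds$ — the bookkeeping checks out), then obtain the extension property by integrating by parts on $C_0^\infty(\R_+)$ and using~(\ref{eqf}), and finish with the closedness of the unitarily conjugated multiplication operator plus the known essential self-adjointness of $\check h_\kappa$ for $\kappa\geq 1$. The paper deliberately does \emph{not} invoke the classical Hankel theorem: it derives Theorem~\ref{leig1} from the singular Titchmarsh--Weyl machinery of Proposition~\ref{t_eig}, first checking that $\mathfrak t_\kappa=(q_\kappa,\mathcal D^r_{q_\kappa},u^\kappa)$ is an expansion triple (Lemma~\ref{lemma_u}) and then computing the spectral measure $\sigma_\kappa=\mathcal V_\kappa$ by evaluating two \emph{locally defined} $m$-functions, one on $\{\mathrm{Re}\,z<0\}$ built from $v^\kappa$ and one on $\C_\pi$ built from the Bessel function $Y_\kappa$ (Lemma~\ref{l_spec}); the diagonalization statement then comes from statement~3 of Proposition~\ref{t_eig} together with $h_\kappa=L_{q_\kappa}$ (Lemma~\ref{ll}). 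What your approach buys is brevity and elementarity; what it costs is that it imports as a black box exactly the classical result the paper is re-proving (note also that you need Hankel--Plancherel on the full range $\kappa>-1$, whereas many textbook statements restrict to $\kappa\geq -1/2$, and that the ``well-known'' self-adjointness of $h_\kappa$ for $\kappa\geq 1$ silently includes a closure argument of the type of Lemma~\ref{ll}). The paper's route, by contrast, is self-contained modulo the general theory, treats integer and noninteger $\kappa$ uniformly through the local-$m$-function device (statement~4 of Proposition~\ref{t_eig}), and — crucially — sets up precisely the apparatus reused to prove Theorem~\ref{leig2}, where no classical transform theorem is available; your reduction to the Hankel transform has no analogue there.
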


By~(\ref{ukappa}) and~(\ref{bessel}), we have $u^\kappa(E|r) = E^{-\kappa/2}r^{1/2}J_{\kappa}(E^{1/2}r)$, $r\in\R_+$, for every $E>0$. The operator $U_\kappa$ hence coincides with transformation~(\ref{eq2}) up to normalization of eigenfunctions. We note that $h_\kappa=h_{|\kappa|}$ for all $\kappa\in \R$ and $h_\kappa$ is therefore diagonalized by $U_{|\kappa|}$ for all real $\kappa$ such that $|\kappa|\geq 1$. If $0\leq \kappa<1$, then $U^{-1}_\kappa \mathcal T^{\mathcal V_\kappa}_\iota U_\kappa$ is the Friedrichs extension of $h_\kappa$ (see~\cite{EverittKalf}).

We now turn to parametrizing all self-adjoint extensions of $h_\kappa$ in the case $-1<\kappa<1$. Let
\[
\mathscr O = \{\kappa\in \C: \kappa \neq \pm 1,\pm 2,\ldots\}.
\]
For $\kappa\in\mathscr O$ and $\vartheta,z\in\C$, we define the function $u^\kappa_\vartheta(z)$ on $\R_+$ by setting
\begin{equation}\label{wkappa}
u^\kappa_\vartheta(z) = \frac{u^{\kappa}(z)\sin(\vartheta+\vartheta_\kappa)-u^{-\kappa}(z)\sin(\vartheta-\vartheta_\kappa)}{\sin\pi\kappa},\quad \kappa\in\mathscr O\setminus\{0\},
\end{equation}
and
\begin{multline}\label{w(z)}
u^0_\vartheta(z|r) = \lim_{\kappa\to 0} u^\kappa_\vartheta(z|r)=\\= u^0(z|r)\cos\vartheta+\frac{2}{\pi}\left[\left(\log\frac{r}{2} + \gamma\right)u^0(z|r) -\sqrt{r}\,\mathcal Y(r^2 z)\right]\sin\vartheta,\quad r\in\R_+,
\end{multline}
where
\begin{equation}\label{varthetakappa}
\vartheta_\kappa=\frac{\pi\kappa}{2},
\end{equation}
the entire function $\mathcal Y$ is given by
\begin{equation}\nonumber
\mathcal Y(\zeta) = \sum_{n=1}^\infty \frac{(-1)^nc_n}{(n!)^2 2^{2n}} \zeta^n,\quad c_n = \sum_{j=1}^n \frac{1}{j},
\end{equation}
and $\gamma = \lim_{n\to\infty} (c_n -\log n)=0,577\ldots$ is the Euler constant.\footnote{To compute the limit of $u^\kappa_\vartheta(z|r)$ as $\kappa\to 0$, we must apply L'H\^{o}pital's rule and use the equality $\Gamma'(1+n)/\Gamma(1+n)=c_n-\gamma$ (see formula~(9) in~Sec.~1.7.1 in~\cite{Bateman}).}

Given $\alpha\in \R$, we set $R_\alpha = \{z\in\C: z= re^{i\alpha} \mbox{ for some }r\geq 0\}$ and
\begin{equation}\label{Calpha}
\C_\alpha = \C\setminus R_\alpha.
\end{equation}
Hence, $\C_\alpha$ is the complex plane with a cut along the ray $R_\alpha$.

The next statement shows that, in spite of its piecewise definition, the quantity $u^\kappa_\vartheta(z|r)$ is actually analytic in all its arguments.

\begin{lemma}\label{l_analyt}
There is a unique analytic function $F$ in the domain $\mathscr O\times\C\times\C\times\C_\pi$ such that $F(\kappa,\vartheta,z,r) = u^\kappa_\vartheta(z|r)$ for every $\vartheta,z\in\C$, $\kappa\in\mathscr O$, and $r\in\R_+$.
\end{lemma}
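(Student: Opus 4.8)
The plan is to write $F$ down explicitly as a quotient and show that the only apparent difficulty, the point $\kappa=0$, is a removable singularity even jointly in all four variables. First I would check that each building block extends analytically in $r$ from $\R_+$ to the slit plane $\C_\pi$. On $\C_\pi=\C\setminus(-\infty,0]$ the principal branch of $\log r$ is analytic, so $r^{1/2\pm\kappa}=\exp\bigl((1/2\pm\kappa)\log r\bigr)$ is jointly analytic in $(\kappa,r)\in\C\times\C_\pi$ and reduces to the usual real power for $r\in\R_+$. Since the coefficients of series~(\ref{Xkappa}) involve the entire function $1/\Gamma$ and decay essentially like $1/(n!)^2$, the series converges locally uniformly, so $\mathcal X_\kappa(\zeta)$ is jointly analytic (indeed entire) in $(\kappa,\zeta)\in\C^2$. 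Consequently the functions $u^{\pm\kappa}(z|r)=r^{1/2\pm\kappa}\mathcal X_{\pm\kappa}(r^2 z)$ from~(\ref{ukappa}) extend to jointly analytic functions of $(\kappa,z,r)\in\C\times\C\times\C_\pi$, agreeing with the original ones for $r\in\R_+$.

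Next I would isolate the numerator of~(\ref{wkappa}), setting
\[
G(\kappa,\vartheta,z,r)=u^{\kappa}(z|r)\sin(\vartheta+\vartheta_\kappa)-u^{-\kappa}(z|r)\sin(\vartheta-\vartheta_\kappa),
\]
now understood with the extended $u^{\pm\kappa}$. Because $\sin(\vartheta\pm\vartheta_\kappa)$ is entire in $(\kappa,\vartheta)$ and the $u^{\pm\kappa}$ are entire in $\kappa$, the function $G$ is jointly analytic on all of $\C\times\C\times\C\times\C_\pi$, with no singularity at $\kappa=0$. On the set where $\kappa\notin\Z$ I would then define $F=G/\sin\pi\kappa$; this is jointly analytic and, for $r\in\R_+$, coincides with $u^\kappa_\vartheta(z|r)$ by~(\ref{wkappa}). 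Since $\mathscr{O}$ meets $\Z$ only at $\kappa=0$, the whole problem reduces to extending $F$ analytically across the slice $\{\kappa=0\}$.

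This last step is the crux, and I expect it to be the main obstacle: the one-variable removability in $\kappa$ is immediate, but upgrading it to joint analyticity in $(\kappa,\vartheta,z,r)$ takes a small argument. A direct computation shows $G$ vanishes identically on $\{\kappa=0\}$ (there $\vartheta_\kappa=0$ and $u^{\kappa}=u^{-\kappa}$, so the two terms cancel). Writing $F=(G/\kappa)\big/\bigl(\sin(\pi\kappa)/\kappa\bigr)$, the denominator extends to an entire function of $\kappa$ equal to $\pi$ at $0$ and nonvanishing for $|\kappa|<1$, hence causes no trouble. For the numerator I would represent $G/\kappa$ by a Cauchy integral in $\kappa$: fixing $(\vartheta,z,r)$ and a small circle $|w|=\varepsilon$, subtracting the vanishing value $G(0,\vartheta,z,r)=0$ from the Cauchy formula for $G(\kappa,\vartheta,z,r)$ yields
\[
\frac{G(\kappa,\vartheta,z,r)}{\kappa}=\frac{1}{2\pi i}\oint_{|w|=\varepsilon}\frac{G(w,\vartheta,z,r)}{w(w-\kappa)}\,dw,\qquad |\kappa|<\varepsilon,
\]
whose right-hand side is manifestly jointly analytic in $(\kappa,\vartheta,z,r)$ by differentiation under the integral sign, the integrand being jointly analytic with $w$ kept away from $0$ and $\kappa$. (Alternatively one can invoke Hartogs' theorem once separate analyticity in each variable is checked.) This produces the analytic extension of $F$ to all of $\mathscr{O}\times\C\times\C\times\C_\pi$, and its value at $\kappa=0$ is $\lim_{\kappa\to0}u^\kappa_\vartheta(z|r)$, which for $r\in\R_+$ is exactly $u^0_\vartheta(z|r)$ by~(\ref{w(z)}).

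Finally, for uniqueness I would use the identity theorem in the single variable $r$: the domain $\mathscr{O}\times\C\times\C\times\C_\pi$ is connected, and for each fixed $(\kappa,\vartheta,z)$ the section $r\mapsto F(\kappa,\vartheta,z,r)$ is analytic on the connected open set $\C_\pi$, in which $\R_+$ has limit points. Hence the prescribed values on $r\in\R_+$ determine each section, and therefore $F$, completely. All told, the only genuinely delicate point is the parametrized removability at $\kappa=0$ in the third paragraph; the remaining assertions are routine consequences of the analyticity of power series and of the principal branch of the power function.
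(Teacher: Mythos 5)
Your proof is correct, but it handles the crucial point --- the singularity at $\kappa=0$ --- by a genuinely different mechanism than the paper. The paper also begins by extending $u^{\pm\kappa}(z|r)$ to $(\kappa,z,r)\in\C\times\C\times\C_\pi$ via the branch of the logarithm on $\C_\pi$, but then defines $F$ \emph{piecewise}: for $\kappa\neq 0$ by the quotient in~(\ref{wkappa}) and at $\kappa=0$ by the explicit analytic extension of formula~(\ref{w(z)}). It then checks only \emph{separate} analyticity --- in $(\vartheta,z,r)$ for fixed $\kappa$, and in $\kappa$ for fixed $(\vartheta,z,r)$, where continuity at $\kappa=0$ comes from the same L'H\^opital/digamma computation used to define~(\ref{w(z)}) --- and concludes joint analyticity by Hartogs' theorem. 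You instead observe that the numerator $G$ vanishes identically on the hyperplane $\{\kappa=0\}$, write $F=(G/\kappa)\big/\bigl(\sin(\pi\kappa)/\kappa\bigr)$, and establish joint analyticity of $G/\kappa$ across $\kappa=0$ directly from the Cauchy-integral representation
\[
\frac{G(\kappa,\vartheta,z,r)}{\kappa}=\frac{1}{2\pi i}\oint_{|w|=\varepsilon}\frac{G(w,\vartheta,z,r)}{w(w-\kappa)}\,dw ,
\]
so you never need Hartogs (you mention it only as an alternative), never need a formula for $F$ at $\kappa=0$, and never need the digamma computation --- the value at $\kappa=0$ is identified with $u^0_\vartheta(z|r)$ automatically, since~(\ref{w(z)}) \emph{defines} $u^0_\vartheta$ as the limit. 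Your route is more self-contained and elementary in its several-variables input (a division lemma rather than Hartogs); the paper's route is shorter modulo Hartogs and produces the explicit closed form of $F$ on the slice $\kappa=0$, which mirrors the formulas~(\ref{w(z)}) and~(\ref{w0}) used elsewhere. You also make the uniqueness argument explicit (identity theorem in the variable $r$, since $\R_+$ has accumulation points in the connected set $\C_\pi$), a point the paper's appendix leaves tacit.
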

The proof of Lemma~\ref{l_analyt} is given in Appendix~\ref{appA}.
\medskip

For every $\kappa\in \mathscr O$ and $\vartheta,z\in\C$, equality~(\ref{eqf}) also holds for $u^\kappa_\vartheta(z)$ in place of $u^{\pm\kappa}(z)$ (this obviously follows from~(\ref{wkappa}) for $\kappa \in \mathscr O\setminus\{0\}$. By Lemma~\ref{l_analyt}, we can take the limit $\kappa\to 0$ and conclude that the same holds for $\kappa=0$).\footnote{Alternatively, we can express $u^0_\vartheta(z|r)$ in terms of the Bessel functions $J_0$ and $Y_0$ by means of the equality $\pi Y_0(\zeta) = 2\left(\gamma+\log\frac{\zeta}{2}\right) J_0(\zeta) - 2\mathcal Y(\zeta^2)$ (see formula~(33) in~Sec.~7.2.4 in~\cite{Bateman}) and use the Bessel equation.}

Further, for every $\kappa\in (-1,1)$ and $\vartheta\in\R$, we define a positive Radon measure $\mathcal V_{\kappa,\vartheta}$ on $\R$ as follows. If $0<|\kappa|<1$, then we set
\begin{equation}\label{measVkappatheta}
\mathcal V_{\kappa,\vartheta} =  \left\{
\begin{matrix}
\tilde {\mathcal V}_{\kappa,\vartheta},& \vartheta \in [-|\vartheta_\kappa|,|\vartheta_\kappa|]+\pi\Z,\\
\frac{\pi\sin\pi\kappa|E_{\kappa,\vartheta}|}{2\kappa\sin(\vartheta+\vartheta_\kappa)\sin(\vartheta-\vartheta_\kappa)}\delta_{E_{\kappa,\vartheta}}+\tilde{\mathcal V}_{\kappa,\vartheta},& \vartheta\in (|\vartheta_\kappa|,\pi-|\vartheta_\kappa|)+\pi\Z,
\end{matrix}
\right.
\end{equation}
where $\vartheta_\kappa$ is defined by~(\ref{varthetakappa}),
the positive Radon measure $\tilde{\mathcal V}_{\kappa,\vartheta}$ on $\R$ is given by
\begin{multline}\label{tildeVkappatheta}
d\tilde{\mathcal V}_{\kappa,\vartheta}(E) = \\ =\frac{1}{2}\frac{\Theta(E) \sin^2\pi\kappa}{E^{-\kappa}\sin^2(\vartheta+\vartheta_\kappa) - 2\cos\pi\kappa\sin(\vartheta+\vartheta_\kappa)\sin(\vartheta-\vartheta_\kappa) + E^{\kappa}\sin^2(\vartheta-\vartheta_\kappa)}\,dE
\end{multline}
and $\delta_{E_{\kappa,\vartheta}}$ is the Dirac measure at the point
\begin{equation}\label{Ekappavartheta}
E_{\kappa,\vartheta} = -\left(\frac{\sin(\vartheta+\vartheta_\kappa)}{\sin(\vartheta-\vartheta_\kappa)}\right)^{1/\kappa}.
\end{equation}
For $\kappa=0$, the measure $\mathcal V_{\kappa,\vartheta}$ is defined by taking the limit $\kappa\to 0$ in formulas~(\ref{measVkappatheta})--(\ref{Ekappavartheta}). This yields
\begin{equation}\label{measV0theta}
\mathcal V_{0,\vartheta} =  \left\{
\begin{matrix}
\tilde {\mathcal V}_{0,\vartheta},& \vartheta \in \pi\Z,\\
\frac{\pi^2|E_{0,\vartheta}|}{2\sin^2\vartheta}\delta_{E_{0,\vartheta}}+\tilde{\mathcal V}_{0,\vartheta},& \vartheta\notin \pi\Z,
\end{matrix}
\right.
\end{equation}
where
\begin{equation}\label{E0vartheta}
E_{0,\vartheta} = -e^{\pi\cot \vartheta}
\end{equation}
and the positive Radon measure $\tilde{\mathcal V}_{0,\vartheta}$ on $\R$ is given by
\begin{equation}\label{tildeV0theta}
d\tilde{\mathcal V}_{0,\vartheta}(E) = \frac{1}{2}\frac{\Theta(E)}{ (\cos\vartheta -\pi^{-1}\log E\sin\vartheta)^2 +\sin^2\vartheta} dE.
\end{equation}

The next theorem describes self-adjoint extensions of $h_{\kappa}$ for $-1<\kappa<1$ in terms of their eigenfunction expansions.
\begin{theorem}\label{leig2}
Let $-1<\kappa<1$. For every $\vartheta\in \R$, there is a unique unitary operator $U_{\kappa,\vartheta}\colon L_2(\R_+)\to L_2(\R,\mathcal V_{\kappa,\vartheta})$ such that
\begin{equation}\nonumber
(U_{\kappa,\vartheta}\psi)(E) = \int_0^\infty u^\kappa_\vartheta(E|r)\psi(r)\,dr,\quad \psi\in L_2^c(\R_+),
\end{equation}
for $\mathcal V_{\kappa,\vartheta}$-a.e. $E$. The operator
\begin{equation}\nonumber
h_{\kappa,\vartheta} = U_{\kappa,\vartheta}^{-1} \mathcal T^{\mathcal V_{\kappa,\vartheta}}_\iota U_{\kappa,\vartheta},
\end{equation}
where $\iota$ is the identity function on $\R$, is a self-adjoint extension of $h_\kappa$. Conversely, every self-adjoint extension of $h_\kappa$ is equal to $h_{\kappa,\vartheta}$ for some $\vartheta\in\R$.
Given $\vartheta,\vartheta'\in\R$, we have $h_{\kappa,\vartheta} = h_{\kappa,\vartheta'}$ if and only if $\vartheta-\vartheta'\in \pi\Z$.
\end{theorem}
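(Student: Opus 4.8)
The plan is to route everything through singular Weyl--Titchmarsh theory, treating $u^{\kappa}(z)$ and $u^{-\kappa}(z)$ (respectively $u^{0}(z)$ and its logarithmic partner from~(\ref{w(z)}) when $\kappa=0$) as a fixed fundamental system whose leading behavior at the origin is $z$-independent, as guaranteed by Lemma~\ref{l_analyt}. First I would record the abstract structure of $h_\kappa$. Since $-1<\kappa<1$, both solutions $u^{\pm\kappa}(z|r)\sim r^{1/2\pm\kappa}$ lie in $L_2$ near $0$, while for $\operatorname{Im}z\neq 0$ exactly one solution lies in $L_2$ near $+\infty$; hence $h_\kappa$ is limit circle at $0$ and limit point at $\infty$, with deficiency indices $(1,1)$, so its self-adjoint extensions form a single real-parameter family. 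I would then read off from~(\ref{wkappa})--(\ref{w(z)}) the two leading coefficients of $u^\kappa_\vartheta(z|r)$ as $r\to 0^+$ (the $r^{1/2+\kappa}$ and $r^{1/2-\kappa}$ coefficients, or the $r^{1/2}$ and $r^{1/2}\log r$ coefficients when $\kappa=0$). These are independent of $z$ and depend on $\vartheta$ only through the ratio $\sin(\vartheta+\vartheta_\kappa):\sin(\vartheta-\vartheta_\kappa)$, with $\vartheta_\kappa$ as in~(\ref{varthetakappa}); each ratio is a self-adjoint boundary condition at $0$, and as $\vartheta$ runs over $\R$ it sweeps out every boundary condition, with $u^\kappa_{\vartheta+\pi}=-u^\kappa_\vartheta$ and with $u^\kappa_\vartheta,u^\kappa_{\vartheta'}$ linearly independent whenever $\vartheta-\vartheta'\notin\pi\Z$. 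Once the diagonalization below identifies $h_{\kappa,\vartheta}$ with the extension carrying this boundary condition, this already yields the last two assertions of the theorem: surjectivity onto all extensions and $h_{\kappa,\vartheta}=h_{\kappa,\vartheta'}\iff\vartheta-\vartheta'\in\pi\Z$.

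Next I would build the $m$-function. For $z\in\C_0$ (the plane cut along $R_0=[0,\infty)$, see~(\ref{Calpha})), let $\Psi_\kappa(z)$ be the solution of~(\ref{eqf}), unique up to scalar, lying in $L_2$ at $+\infty$, expressed through Macdonald/Hankel functions; it is analytic in $z$ on $\C_0$ and decays exponentially for $E<0$. Expanding $\Psi_\kappa(z)$ in the fundamental system adapted to $\vartheta$ produces a partner solution $v^\kappa_\vartheta(z)$ and a coefficient $m_{\kappa,\vartheta}(z)$ with $\Psi_\kappa(z)\propto v^\kappa_\vartheta(z)+m_{\kappa,\vartheta}(z)\,u^\kappa_\vartheta(z)$; since the reference solutions are real for real arguments, $m_{\kappa,\vartheta}$ is a Herglotz function of $z$, and its dependence on $\kappa,\vartheta$ is analytic by Lemma~\ref{l_analyt}. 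The singular Weyl--Kodaira expansion theorem then produces a unitary map with kernel $u^\kappa_\vartheta(E|r)$ on $L_2^c(\R_+)$ onto $L_2(\R,\rho_{\kappa,\vartheta})$ that diagonalizes the extension $h_{\kappa,\vartheta}$ carrying the $\vartheta$-boundary condition, where the spectral measure $\rho_{\kappa,\vartheta}$ is recovered by Stieltjes--Perron inversion: its absolutely continuous density is
\[
\frac{d\rho_{\kappa,\vartheta}}{dE}(E)=\frac1\pi\lim_{\varepsilon\downarrow 0}\operatorname{Im}m_{\kappa,\vartheta}(E+i\varepsilon)\quad(E>0),
\]
and it carries a point mass $\operatorname*{Res}_{z=E_0}m_{\kappa,\vartheta}(z)$ at each pole $E_0<0$ of $m_{\kappa,\vartheta}$. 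Density of $L_2^c(\R_+)$ gives uniqueness of $U_{\kappa,\vartheta}$, and the intertwining $U_{\kappa,\vartheta}h_{\kappa,\vartheta}U_{\kappa,\vartheta}^{-1}=\mathcal T^{\rho_{\kappa,\vartheta}}_\iota$ gives self-adjointness together with the extension property.

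It then remains to compute $m_{\kappa,\vartheta}$ explicitly and to match $\rho_{\kappa,\vartheta}$ with $\mathcal V_{\kappa,\vartheta}$. Evaluating the boundary values of $m_{\kappa,\vartheta}$ for $E>0$ should reproduce the continuous density~(\ref{tildeVkappatheta}), whose denominator is exactly $\big|E^{-\kappa/2}\sin(\vartheta+\vartheta_\kappa)-e^{i\pi\kappa}E^{\kappa/2}\sin(\vartheta-\vartheta_\kappa)\big|^2$, the modulus squared of the boundary-value combination of the two exponents. Tracking the single pole of $m_{\kappa,\vartheta}$ on $(-\infty,0)$ yields the location~(\ref{Ekappavartheta}) and the Dirac weight in~(\ref{measVkappatheta}), the pole being present precisely when $\sin(\vartheta+\vartheta_\kappa)/\sin(\vartheta-\vartheta_\kappa)>0$, i.e. for $\vartheta\in(|\vartheta_\kappa|,\pi-|\vartheta_\kappa|)+\pi\Z$. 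Finally the case $\kappa=0$, formulas~(\ref{measV0theta})--(\ref{tildeV0theta}), follows by passing to the limit $\kappa\to 0$, using Lemma~\ref{l_analyt} and L'H\^opital exactly as in~(\ref{w(z)}). I expect the genuine obstacle to be this explicit inversion: fixing the normalizations of $\Psi_\kappa(z)$ and $v^\kappa_\vartheta(z)$ so that the residue comes out with the stated constant $\pi\sin\pi\kappa\,|E_{\kappa,\vartheta}|/(2\kappa\sin(\vartheta+\vartheta_\kappa)\sin(\vartheta-\vartheta_\kappa))$, pinning down the exact $\vartheta$-threshold for the bound state, and controlling the logarithmic confluence at $\kappa=0$; the abstract steps above are comparatively routine given Theorem~\ref{leig1} and Lemma~\ref{l_analyt}.
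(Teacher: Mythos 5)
Your proposal follows essentially the same route as the paper: the paper likewise takes the Hankel-function Weyl solution $v^\kappa(z)$ at infinity, forms the singular Titchmarsh--Weyl $m$-function $\mathscr M_{\kappa,\vartheta}(z)=-\tfrac12\,W(v^\kappa(z),u^\kappa_{\vartheta-\pi/2}(z))/W(v^\kappa(z),u^\kappa_{\vartheta}(z))$, recovers $\mathcal V_{\kappa,\vartheta}$ by Stieltjes inversion on $(0,\infty)$ plus the residue at the single negative pole (Lemma~\ref{l_spec1}, with the $\kappa=0$ case handled by the same limiting procedure you describe), and obtains the converse parametrization and the $\pi$-periodicity from the boundary condition at the limit-circle endpoint $0$ (Lemma~\ref{l_lqf} applied to the zero-energy solutions $u^\kappa_\vartheta(0)$). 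The only differences are cosmetic: the paper encodes the $\vartheta$-boundary condition via Wronskians with $u^\kappa_\vartheta(0)$ rather than via ratios of leading asymptotic coefficients, and packages your ``singular Weyl--Kodaira expansion theorem'' as Proposition~\ref{t_eig}.
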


For $\vartheta = \vartheta_\kappa$, we have $\mathcal V_{\kappa,\vartheta} = \mathcal V_\kappa$ and $u^\kappa(z)=u^\kappa_\vartheta(z)$ for all $z\in\C$, and the operator $U_{\kappa,\vartheta}$ therefore coincides with the Hankel transformation $U_\kappa$.

The expansions described by Theorem~\ref{leig2} have the advantage that neither the eigenfunctions $u^\kappa_\vartheta(E)$ nor the spectral measures\footnote{In this paper, the term ``spectral measure'' always refers to a certain positive measure on $\R$ whose precise definition is given in Proposition~\ref{t_eig}. This usage differs from that adopted in~\cite{Smirnov2015}, where this term was applied to projection-valued measures in a Hilbert space.} $\mathcal V_{\kappa,\vartheta}$ have any discontinuities at $\kappa = 0$. This follows from Lemma~\ref{l_analyt} and the next theorem.

\begin{theorem}\label{l_borel}
Let $\varphi$ be a continuous function or a bounded Borel function on $\R$ with compact support. Then $(\kappa,\vartheta)\to \int \varphi(E)\,d\mathcal V_{\kappa,\vartheta}(E)$ is respectively a continuous function or a Borel function on $(-1,1)\times\R$ that is bounded on $[-\alpha,\alpha]\times \R$ for every $0\leq\alpha<1$.
\end{theorem}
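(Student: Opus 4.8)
The plan is to treat the absolutely continuous part $\tilde{\mathcal V}_{\kappa,\vartheta}$ and the single point mass of $\mathcal V_{\kappa,\vartheta}$ separately, after two reductions. First, all the data entering~(\ref{measVkappatheta})--(\ref{Ekappavartheta}) are invariant under $\vartheta\to\vartheta+\pi$ and under $\kappa\to-\kappa$ (in the latter case the two summands in the denominator of~(\ref{tildeVkappatheta}) and the roles of $\vartheta\pm\vartheta_\kappa$ are interchanged). Hence it suffices to prove every assertion with $(\kappa,\vartheta)$ restricted to the compact rectangle $[0,\alpha]\times[0,\pi]$, and to prove continuity in a neighborhood of an arbitrary such point. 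Second, if $\operatorname{supp}\varphi\subset[-M,M]$, only the restriction of $\mathcal V_{\kappa,\vartheta}$ to $[-M,M]$ enters, so I will bound $\int_0^M\varphi\,d\tilde{\mathcal V}_{\kappa,\vartheta}$ together with the contribution $w_{\kappa,\vartheta}\varphi(E_{\kappa,\vartheta})$ of the point mass (present only for $\vartheta\in(\vartheta_\kappa,\pi-\vartheta_\kappa)$, and relevant only when $E_{\kappa,\vartheta}\in[-M,0)$), where $w_{\kappa,\vartheta}$ denotes the coefficient of $\delta_{E_{\kappa,\vartheta}}$ in~(\ref{measVkappatheta}).

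For the absolutely continuous part, write $d\tilde{\mathcal V}_{\kappa,\vartheta}=\rho_{\kappa,\vartheta}(E)\,dE$. For fixed $\kappa\in(0,1)$ and $E>0$, the denominator $D_{\kappa,\vartheta}(E)$ in~(\ref{tildeVkappatheta}) has the form $c_0(\kappa,E)+A\cos 2\vartheta+B\sin 2\vartheta$, and a direct computation gives $c_0^2-(A^2+B^2)=\tfrac14\sin^2\pi\kappa\,(4-\cos^2\pi\kappa)$, whence $\min_\vartheta D_{\kappa,\vartheta}(E)\ge c\,\sin^2\pi\kappa/(E^{-\kappa}+E^{\kappa})$ with an absolute constant $c>0$. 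This yields the key uniform estimate $\rho_{\kappa,\vartheta}(E)\le C(E^{-\kappa}+E^{\kappa})\le 2C\,E^{-\alpha}$ for $0<E\le 1$, and $\rho_{\kappa,\vartheta}(E)\le C'M^\alpha$ for $1\le E\le M$, valid for all $|\kappa|\le\alpha$ and all $\vartheta$; thus $\rho_{\kappa,\vartheta}$ is dominated on $(0,M]$ by a single function $G\in L^1(0,M)$ (this is where $\alpha<1$ is used). Moreover $\rho_{\kappa,\vartheta}(E)$ is jointly continuous in $(\kappa,\vartheta,E)$ on $(-1,1)\times\R\times(0,\infty)$: for $\kappa\neq 0$ this is clear since $D_{\kappa,\vartheta}(E)>0$, and continuity at $\kappa=0$ holds by the very definition~(\ref{measV0theta})--(\ref{tildeV0theta}) of $\mathcal V_{0,\vartheta}$ as the limit $\kappa\to0$. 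Dominated convergence then shows that $(\kappa,\vartheta)\mapsto\int_0^M\varphi\,\rho_{\kappa,\vartheta}\,dE$ is continuous, and the same domination gives the uniform bound $\|\varphi\|_\infty\int_0^M G$.

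For the point mass I reparametrize by $\eta=|E_{\kappa,\vartheta}|$. Using $\eta^\kappa=\sin(\vartheta+\vartheta_\kappa)/\sin(\vartheta-\vartheta_\kappa)$ from~(\ref{Ekappavartheta}) together with the identities $2\sin(\vartheta+\vartheta_\kappa)\sin(\vartheta-\vartheta_\kappa)=\cos\pi\kappa-\cos 2\vartheta$ and $\sin^2(\vartheta+\vartheta_\kappa)+\sin^2(\vartheta-\vartheta_\kappa)=1-\cos\pi\kappa\cos 2\vartheta$, I eliminate $\vartheta$ and obtain the clean expression
\[
w_{\kappa,\vartheta}=\frac{\pi}{2\kappa\sin\pi\kappa}\bigl(\eta^{1-\kappa}+\eta^{1+\kappa}-2\eta\cos\pi\kappa\bigr)=\frac{\pi}{2\kappa\sin\pi\kappa}\Bigl(\eta\bigl(\eta^{\kappa/2}-\eta^{-\kappa/2}\bigr)^2+2\eta(1-\cos\pi\kappa)\Bigr).
\]
The right-hand side exhibits the bracket as a sum of two nonnegative terms and shows that $w$, viewed as a function of $(\kappa,\eta)$, extends continuously to $[0,\alpha]\times[0,M]$, with value $0$ on $\{\eta=0\}$ and limiting value $\tfrac12\eta(\pi^2+\log^2\eta)$ on $\{\kappa=0\}$ (matching~(\ref{measV0theta})); being continuous on a compact set, it is bounded there, which gives $|w_{\kappa,\vartheta}\varphi(E_{\kappa,\vartheta})|\le\|\varphi\|_\infty\sup w$ whenever $\eta\le M$. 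For continuity of the point-mass contribution I check the two edges of the region where the mass is present: as $\vartheta\to\vartheta_\kappa^+$ one has $E_{\kappa,\vartheta}\to-\infty$, so $\varphi(E_{\kappa,\vartheta})=0$ nearby and the contribution vanishes, matching the side with no point mass; as $\vartheta\to(\pi-\vartheta_\kappa)^-$ one has $\eta\to0$ and hence $w\to0$ by the extension above. Joint continuity at the two degenerate corners $\kappa=0$, $\vartheta\in\{0,\pi\}$ follows from the same analysis, since along any approach within the point-mass region one has $\eta\to\infty$ (left corner) or $\eta\to0$ with $w\to0$ (right corner).

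Combining the two parts gives continuity of $(\kappa,\vartheta)\mapsto\int\varphi\,d\mathcal V_{\kappa,\vartheta}$ for continuous $\varphi$, together with its boundedness on $[-\alpha,\alpha]\times\R$; the bound $\|\varphi\|_\infty\bigl(\int_0^M G+\sup w\bigr)$ uses only $\|\varphi\|_\infty$ and $\operatorname{supp}\varphi$, and therefore holds verbatim for every bounded Borel $\varphi$ supported in $[-M,M]$. The Borel case then follows by a monotone-class argument: the map $\varphi\mapsto\bigl[(\kappa,\vartheta)\mapsto\int\varphi\,d\mathcal V_{\kappa,\vartheta}\bigr]$ is linear, carries $C_c(\R)$ into continuous (hence Borel) functions, and, by dominated convergence with the fixed data $G$ and $\sup w$, sends bounded pointwise limits to bounded pointwise limits; since such limits of Borel functions are Borel and $C_c(\R)$ generates the Borel $\sigma$-algebra, every bounded Borel $\varphi$ of compact support yields a Borel function. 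The main obstacle is precisely the behavior near $\kappa=0$ and at the corners $\vartheta\in\pi\Z$, where both the integrable domination of $\rho_{\kappa,\vartheta}$ and the continuity and boundedness of $w_{\kappa,\vartheta}$ become indeterminate $0/0$ expressions; the two explicit reparametrizations above — the minimization formula for $\min_\vartheta D_{\kappa,\vartheta}(E)$ and the expression for $w_{\kappa,\vartheta}$ in terms of $\eta$ — resolve this by replacing those expressions with manifestly regular ones.
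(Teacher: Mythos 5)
Your overall strategy is the same as the paper's (split off the point mass, dominate the absolutely continuous density uniformly in $|\kappa|\le\alpha$ and apply dominated convergence, rewrite the point-mass weight as a function of $(\kappa,\eta)$ with $\eta=|E_{\kappa,\vartheta}|$, and do the edge analysis), but the step you call the ``key uniform estimate'' contains a genuine error. Writing the denominator of~(\ref{tildeVkappatheta}) as $D_{\kappa,\vartheta}(E)=c_0+A\cos 2\vartheta+B\sin 2\vartheta$ with $s=(E^{-\kappa}+E^{\kappa})/2$ and $c=\cos\pi\kappa$, one gets $c_0=s-c^2$, $A=c(1-s)$, $B=\tfrac12(E^{-\kappa}-E^{\kappa})\sin\pi\kappa$, and a direct expansion (using $\tfrac14(E^{-\kappa}-E^{\kappa})^2=s^2-1$) gives
\[
c_0^2-(A^2+B^2)=(1-c^2)^2=\sin^4\pi\kappa,
\]
not $\tfrac14\sin^2\pi\kappa\,(4-\cos^2\pi\kappa)$ (already at $\kappa=1/3$, $E=1$ the two expressions are $9/16$ and $45/64$). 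The two missing powers of $\sin\pi\kappa$ are exactly what matters near $\kappa=0$: your bounds $\min_\vartheta D_{\kappa,\vartheta}(E)\ge c\,\sin^2\pi\kappa/(E^{-\kappa}+E^{\kappa})$ and $\rho_{\kappa,\vartheta}(E)\le C(E^{-\kappa}+E^{\kappa})$ with \emph{absolute} constants are false. Indeed, by~(\ref{tildeV0theta}), $\rho_{0,\vartheta}(E)=\tfrac12\bigl[(\cos\vartheta-\pi^{-1}\log E\,\sin\vartheta)^2+\sin^2\vartheta\bigr]^{-1}$, and choosing $\cot\vartheta=\pi^{-1}\log E$ yields $\rho_{0,\vartheta}(E)=\tfrac12(1+\pi^{-2}\log^2 E)$, which is unbounded in $E$, whereas $C(E^{-\kappa}+E^{\kappa})|_{\kappa=0}=2C$; for small $\alpha$ the same choice (with $E=e^{-1/\alpha}$, say) also violates your bounds $2CE^{-\alpha}$ and $C'M^{\alpha}$. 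So the dominating function $G$ you construct does not in fact dominate, and the dominated-convergence argument for the absolutely continuous part --- the core of the proof --- collapses. The density genuinely grows like $\log^2E$ near $\kappa=0$, so any domination of the form $C_\alpha(E^{-\alpha}+E^{\alpha})$ must have $C_\alpha$ depending on $\alpha$; this dependence is not a technicality you can avoid.

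The repair is precisely the paper's estimate: the correct identity gives $\min_\vartheta D_{\kappa,\vartheta}(E)=c_0-\sqrt{c_0^2-\sin^4\pi\kappa}\ge \sin^4\pi\kappa/(2c_0)$, hence $\rho_{\kappa,\vartheta}(E)\le c_0/\sin^2\pi\kappa=1+\tfrac12\Phi(\kappa,E)^2$ with $\Phi$ as in~(\ref{gkappaE0}), and the monotonicity of $\kappa\mapsto\Phi(\kappa,E)^2$ on $[0,1)$ then yields the $\alpha$-dependent bound~(\ref{boundbound}); with this substitution your argument for the absolutely continuous part goes through. The rest of your proposal is essentially sound and parallels the paper: your formula $w_{\kappa,\vartheta}=\frac{\pi}{2\kappa\sin\pi\kappa}(\eta^{1-\kappa}+\eta^{1+\kappa}-2\eta\cos\pi\kappa)$ is correct (it is equivalent to the paper's expression $\tilde\Phi(\kappa,|E_{\kappa,\vartheta}|)$), and your edge analysis matches the paper's. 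Two assertions there still need justification, since they again conceal $0/0$ limits at $\kappa=0$: the joint continuity of the extended $w$ at $(\kappa,\eta)=(0,0)$, needed at the corner $\kappa\to0$, $\vartheta\to\pi$ (it follows, e.g., from $\cosh(\kappa\log\eta)-1\le\tfrac12\kappa^2\log^2\eta\,\cosh(\alpha\log\eta)$ for $|\kappa|\le\alpha$, which shows $w\to0$ as $\eta\to0$ uniformly in $\kappa$), and the joint continuity of $(\kappa,\vartheta)\mapsto E_{\kappa,\vartheta}$ across $\kappa=0$ on the point-mass region, which your reparametrization by $\eta$ presupposes and which the paper proves via the auxiliary function $g$. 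Your symmetry reduction and the monotone-class treatment of Borel $\varphi$ are valid (the paper instead observes directly that the point-mass contribution is Borel as a composition of Borel maps).
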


Our main results are Theorems~\ref{leig2} and~\ref{l_borel}. We also give a new proof of Theorem~\ref{leig1} based on locally defined singular $m$-functions (see below).

To prove Theorems~\ref{leig1} and~\ref{leig2}, we use a recently developed variant of the Titch\-marsh-Weyl-Kodaira theory~\cite{GesztesyZinchenko,KST}. In those papers,
a generalization of the notion of the Titch\-marsh-Weyl $m$-function was proposed that is applicable not only to problems with a regular endpoint but also to a broad class of Schr\"odinger operators with two singular endpoints. Using such singular $m$-functions leads to a notable simplification in the treatment of eigenfunction expansions in comparison with the general theory~\cite{Kodaira,Naimark} based on matrix-valued measures (but we note that the results in~\cite{GesztesyZinchenko,KST} for eigenfunction expansions can be easily derived from Kodaira's general approach~\cite{Kodaira}; see Remark~\ref{reig} below).

The paper is organized as follows. In Sec.~\ref{s_one}, we give the general theory concerning self-adjoint extensions of one-dimensional Schr\"odinger operators and their eigenfunction expansions. The main statement in that section, Proposition~\ref{t_eig}, is similar to Theorem~3.4 in~\cite{KST}, but unlike the latter gives a local version of the formula for the spectral measures. This allows using different $m$-functions for different regions of the spectral parameter. In Sec.~\ref{s_eig}, we give a proof of Theorem~\ref{leig1} illustrating this local approach to finding spectral measures and establish Theorem~\ref{leig2}. Section~\ref{s4} is devoted to the proof of Theorem~\ref{l_borel}.

\section{One-dimensional Schr\"odinger operators}
\label{s_one}

In this section, we recall basic facts~\cite{Naimark,Teschl,Weidmann} concerning self-adjoint extensions of one-dimensional Schr\"odinger operators and briefly describe the approach to eigenfunction expansions developed in~\cite{GesztesyZinchenko,KST}. A distinctive feature of the subsequent exposition is that it uses the notion of a boundary space (see Definition~\ref{d_bound} below) that can be viewed as a formalization of the concept of a self-adjoint boundary condition. Using boundary spaces allows treating the limit point and limit circle cases on equal footing whenever possible, which makes the presentation of results clearer.

Let $-\infty \leq a < b \leq \infty$, $\lambda_{a,b}$ be the restriction to $(a,b)$ of the Lebesgue measure $\lambda$ on $\R$, and $\mathcal D$ be the space of all complex continuously differentiable functions on $(a,b)$ whose derivative is absolutely continuous on $(a,b)$ (i.e., absolutely continuous on every segment $[c,d]$ with $a<c\leq d<b$).
Let $q$ be a complex locally integrable function on $(a,b)$. Given $z\in\C$, we let $l_{q,z}$ denote the linear operator from $\mathcal D$ to the space of complex $\lambda_{a,b}$-equivalence classes such that
\begin{equation}\label{lqz}
(l_{q,z} f)(r) = -f''(r)+ q(r) f(r) - z f(r)
\end{equation}
for $\lambda$-a.e. $r\in (a,b)$ and set
\begin{equation}\nonumber
l_q = l_{q,0}.
\end{equation}
For every $f\in\mathcal D$ and $z\in\C$, we have $l_{q,z}f = l_q f - z[f]$, where $[f]=[f]_{\lambda_{a,b}}$ is the $\lambda_{a,b}$-equivalence class of $f$.
For every $c\in (a,b)$ and all complex numbers $z$, $\zeta_1$, and $\zeta_2$, there is a unique solution $f$ of the equation $l_{q,z}f=0$ such that $f(c)=\zeta_1$ and $f'(c)=\zeta_2$. This implies that solutions of $l_{q,z}f=0$ constitute a two-dimensional subspace of $\mathcal D$.
The Wronskian $W_r(f,g)$ at a point $r\in (a,b)$ of any functions $f,g\in \mathcal D$ is defined by the
relation
\begin{equation}\label{wronskian}
W_r(f,g) = f(r)g'(r) - f'(r)g(r).
\end{equation}
Clearly, $r\to W_r(f,g)$ is an absolutely continuous function on $(a,b)$. If $f$ and $g$ are such that $r\to W_r(f,g)$ is a constant function on $(a,b)$ (in particular, this is the case when $f$ and $g$ are solutions of $l_{q,z}f=l_{q,z}g=0$ for some $z\in\C$), then its value is denoted by $W(f,g)$.
It follows immediately from~(\ref{wronskian}) that the identities
\begin{align}\label{Plucker}
&W_r(f_1,f_2)W_r(f_3,f_4) + W_r(f_1,f_3)W_r(f_4,f_2) + W_r(f_2,f_3)W_r(f_1,f_4) = 0,\\
& W_r(f_1 f_2,f_3f_4) = f_1(r)f_3(r)W_r(f_2,f_4) + W_r(f_1,f_3)f_2(r)f_4(r)\label{wrprod}
\end{align}
hold for any $f_1,f_2,f_3,f_4\in \mathcal D$ and $r\in (a,b)$.

In the rest of this section, we assume that $q$ is real. Let
\begin{equation}\nonumber
\mathcal D_q = \{f\in \mathcal D : f \mbox{ and } l_q f \mbox{ are both square-integrable on } (a,b)\}.
\end{equation}
A $\lambda_{a,b}$-measurable complex function $f$ is said to be left or right square-integrable on $(a,b)$ if respectively $\int_a^c |f(r)|^2\,dr <\infty$ or $\int_c^b
|f(r)|^2\,dx <\infty$ for any $c\in (a,b)$. The subspace of $\mathcal D$ consisting of left or right square-integrable on $(a,b)$ functions $f$ such that $l_qf$ is also respectively left or right square-integrable on $(a,b)$ is denoted by $\mathcal D_q^l$ or $\mathcal D_q^r$. We obviously have $\mathcal D_q = \mathcal D_q^l\cap \mathcal D_q^r$.
It follows from~(\ref{lqz}) by integrating by parts that
\[
\int_c^d ((l_{q,z} f)(r)g(r) - f(r) (l_{q,z} g)(r))\,dr = W_d(f,g) - W_c(f,g)
\]
for every $f,g\in \mathcal D$, $z\in\C$, and $c,d\in (a,b)$. This implies the existence of limits $W_a(f,g) = \lim_{r\downarrow a} W_r(f,g)$ and $W_b(f,g) = \lim_{r\uparrow b} W_r(f,g)$ respectively for every $f,g\in \mathcal D_q^l$ and $f,g\in \mathcal D_q^r$. Moreover, it follows that
\begin{equation}\label{anti}
\langle l_q f, [g]\rangle - \langle [f], l_q g\rangle = W_{b}(\bar f,g) - W_{a}(\bar f,g)
\end{equation}
for any $f,g\in \mathcal D_q$, where $\langle\cdot,\cdot\rangle$ is the scalar product in $L_2(a,b)$.

For any linear subspace $Z$ of $\mathcal D_q$, let $L_q(Z)$ be the linear operator in $L_2(a,b)$ defined by the relations
\begin{equation}\label{L_q(Z)}
\begin{split}
& D_{L_q(Z)} = \{[f]: f\in Z\},\\
& L_q(Z) [f] = l_q f,\quad f\in Z.
\end{split}
\end{equation}
We define the minimal operator $L_q$ by setting
\begin{equation}\label{L_q}
L_q = L_q(\mathcal D_q^0),
\end{equation}
where
\begin{equation}\label{D^0_q}
\mathcal D_q^0 = \{f\in \mathcal D_q: W_a(f,g)=W_b(f,g)=0\mbox{ for any }g\in \mathcal D_q\}.
\end{equation}
By~(\ref{anti}), the operator $L_q(Z)$ is symmetric if and only if $W_{a}(\bar f,g) = W_{b}(\bar f,g)$ for any $f,g\in Z$. In particular, $L_q$ is a symmetric operator.
Moreover, $L_q$ is closed and densely defined, and its adjoint $L_q^*$ is given by
\begin{equation}\label{lq*}
L_q^* = L_q(\mathcal D_q)
\end{equation}
(see~Lemma~9.4 in~\cite{Teschl}).
If $T$ is a symmetric extension of $L_q$, then $L^*_q$ is an extension of $T^*$ and hence of $T$. In view of~(\ref{lq*}), this implies that $T$ is of the form $L_q(Z)$ for some subspace $Z$ of~$\mathcal D_q$.

\begin{remark}\label{r_realiz}
Self-adjoint operators of the form $L_q(Z)$ can be naturally viewed as self-adjoint realizations of the differential expression $-d^2/dr^2 +q$. If $L_q(Z)$ is self-adjoint, then equality~(\ref{lq*}) and the closedness of $L_q$ imply that $L_q(Z)$ is an extension of $L_q$ because $L_q(\mathcal D_q)$ is an extension of $L_q(Z)$. Therefore, the self-adjoint realizations of the expression $-d^2/dr^2 +q$ are precisely the self-adjoint extensions of the minimal operator $L_q$.
\end{remark}

\begin{definition}\label{d_bound}
We say that a linear subspace $X$ of $\mathcal D_q^l$ is a left boundary space if
\begin{enumerate}
\item[1.] if $W_{a}(\bar f,g)=0$ for any $f,g\in X$ and
\item[2.] if $g\in X$ whenever $g\in \mathcal D_q^l$ satisfies the equality $W_{a}(\bar f,g)=0$ for all $f\in X$.
\end{enumerate}
Replacing $\mathcal D_q^l$ with $\mathcal D_q^r$ and $a$ with $b$, we obtain the definition of a right boundary space.
\end{definition}

\begin{definition}\label{d_lpc}
If $W_{a}(f,g)=0$ for any $f,g\in \mathcal D_q^l$, then $q$ is said to be in the limit point case (l.p.c.) at $a$. Otherwise $q$ is said to be in the limit circle case (l.c.c.) at $a$. Similarly, $q$ is said to be in the l.p.c. at $b$ if $W_{b}(f,g)=0$ for any $f,g\in \mathcal D_q^r$ and to be in the l.c.c. at $b$ otherwise.
\end{definition}

Clearly, $q$ is in the l.p.c. at $a$ or $b$ if and only if $\mathcal D_q^l$ or $\mathcal D_q^r$ is the respective unique left or right boundary space. Given $f\in\mathcal D_q^l$ or $f\in\mathcal D_q^r$, we set
\begin{equation}\label{dqf}
\mathcal D_{q,f}^l = \{g\in \mathcal D_q^l : W_a(\bar f,g) = 0\},\quad \mathcal D_{q,f}^r = \{g\in \mathcal D_q^r : W_b(\bar f,g) = 0\}.
\end{equation}
For every $E\in\R$, we let $S_{q,E}^l$ and $S_{q,E}^r$ denote the respective sets of all nontrivial real elements $f$ of $\mathcal D_q^l$ and $\mathcal D_q^r$ such that $l_{q,E}f=0$.

The next proposition reformulates well-known results concerning self-adjoint extensions of $L_q$ (see, e.g., Sec.~9.2 in~\cite{Teschl}) in the language of boundary spaces.

\begin{proposition}\label{p_bound}
Let $q$ be a real locally integrable function on $(a,b)$. Then the following statements hold:
\begin{itemize}
\item[$1.$] Let $X$ and $Y$ respectively be left and right boundary spaces. Then the operator $L_q(X\cap Y)$ is a self-adjoint extension of $L_q$.
\item[$2.$] Let $L_q(X\cap Y) = L_q(\tilde X\cap\tilde Y)$ for some left boundary spaces $X$ and $\tilde X$ and right boundary spaces $Y$ and $\tilde Y$. Then we have $X=\tilde X$ and $Y=\tilde Y$.
\item[$3.$] Let $E\in\R$ and $f\in S_{q,E}^l$ or $f\in S_{q,E}^r$. Then $\mathcal D_{q,f}^l$ or $\mathcal D_{q,f}^r$ is respectively a left or right boundary space.
\item[$4.$] Let $z\in\C$. Then $q$ is in l.c.c. at $a$ or at $b$ if and only if every $f\in \mathcal D$ such that $l_{q,z}f=0$ is respectively left or right square-integrable on $(a,b)$.
\item[$5.$] If $q$ is in l.p.c. either at $a$ or at $b$, then every self-adjoint extension of $L_q$ is equal to $L_q(X\cap Y)$ for some left boundary space $X$ and right boundary space $Y$.
\item[$6.$] Let $q$ be in l.c.c. at $a$ or $b$ and $E\in \R$. Then every left or right boundary space is respectively equal to $\mathcal D_{q,f}^l$ or $\mathcal D_{q,f}^r$ for some $f\in S_{q,E}^l$ or $f\in S_{q,E}^r$.
\end{itemize}
\end{proposition}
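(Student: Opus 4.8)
The plan is to read the six claims as translations of classical self-adjoint-extension theory (Sec.~9.2 of \cite{Teschl}) into the boundary-space language, organized around the sesquilinear forms $(f,g)\mapsto W_a(\bar f,g)$ on $\mathcal D_q^l$ and $(f,g)\mapsto W_b(\bar f,g)$ on $\mathcal D_q^r$. Antisymmetry of the Wronskian together with reality of $q$ gives $\overline{W_a(\bar g,f)}=-W_a(\bar f,g)$, so these forms are skew-Hermitian and a boundary space is precisely a maximal isotropic (Lagrangian) subspace. The engine of the first two parts is a \emph{decoupling} observation: for $f\in\mathcal D_q^l$ and a smooth cutoff $\chi$ equal to $1$ near $a$ and $0$ near $b$, the function $\chi f$ lies in $\mathcal D_q$, agrees with $f$ near $a$, and vanishes near $b$, so $W_a(\bar p,\chi f)=W_a(\bar p,f)$ and $W_b(\bar p,\chi f)=0$ for every $p$. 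In particular, if $X$ is a left and $Y$ a right boundary space and $f\in X$, then $\chi f\in X\cap Y$: it lies in $X$ by item~2 of Definition~\ref{d_bound} since $W_a(\bar p,\chi f)=0$ for $p\in X$, and in $Y$ because $W_b$ annihilates it against all of $\mathcal D_q^r$.

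For part~1, symmetry of $L_q(X\cap Y)$ is immediate from~(\ref{anti}) because $W_a$ and $W_b$ both vanish on $X\cap Y$. Since $L_q\subseteq L_q(X\cap Y)\subseteq L_q^*=L_q(\mathcal D_q)$, the adjoint is $L_q(W)$ with $W=\{g\in\mathcal D_q:W_b(\bar f,g)=W_a(\bar f,g)\text{ for all }f\in X\cap Y\}$ by~(\ref{anti}); the inclusion $X\cap Y\subseteq W$ is again symmetry, and for the reverse I would feed the cutoffs $\chi f$ ($f\in X$) into the defining relation of $W$: as $W_b(\bar{\chi f},g)=0$, I obtain $W_a(\bar f,g)=W_a(\bar{\chi f},g)=0$ for all $f\in X$, so $g\in X$ by item~2, and symmetrically $g\in Y$. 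Thus $W=X\cap Y$ and $L_q(X\cap Y)$ is self-adjoint. Part~2 follows by recovering the factors: equality of the operators forces $Z:=X\cap Y=\tilde X\cap\tilde Y$ as sets of continuous functions, and the same cutoff argument identifies $X=\{g\in\mathcal D_q^l:W_a(\bar f,g)=0\text{ for all }f\in Z\}$, a set depending only on $Z$; hence $X=\tilde X$, and symmetrically $Y=\tilde Y$.

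Parts~3, 4, and~6 rest on the Weyl dichotomy. Part~4 is the Weyl alternative, which I would cite from \cite{Teschl}: the number of left-square-integrable solutions of $l_{q,z}f=0$ is independent of $z$, and l.c.c.\ at $a$ is equivalent to this number being $2$; the easy direction records that two solutions with unit Wronskian realize $W_a\neq0$. For part~3, item~2 of Definition~\ref{d_bound} is automatic for $\mathcal D_{q,f}^l$ because the real solution $f$ satisfies $W_a(\bar f,f)=0$, so that $f\in\mathcal D_{q,f}^l$ and testing with $p=f$ already yields the membership condition; item~1, the vanishing of $W_a(\bar g,h)$ on $\mathcal D_{q,f}^l$, I would obtain by splitting into real and imaginary parts and using that in the l.c.c.\ the form descends to a nondegenerate symplectic form on the two-dimensional quotient of $\mathcal D_q^l$ by its radical, on which the line of $f$ is its own orthogonal. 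Part~6 is the converse packaging: a left boundary space is an isotropic complex line $\C v$ in that quotient, and writing $v=v_1+iv_2$ with $v_j$ real gives $W_a(\bar v,v)=2i\,W(v_1,v_2)$, so isotropy forces $v_1,v_2$ proportional to a single real solution $f$, whence the line is $\C f$ and the space is $\mathcal D_{q,f}^l$.

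For part~5, with l.p.c.\ at $a$ the space $\mathcal D_q^l$ is the unique left boundary space, so $Y\mapsto L_q(\mathcal D_q^l\cap Y)$ maps right boundary spaces to self-adjoint operators by part~1 and is injective by part~2. Classical von Neumann theory shows the self-adjoint extensions form a single point (if also l.p.c.\ at $b$) or a one-parameter family (if l.c.c.\ at $b$), matched exactly by the available right boundary spaces --- $\mathcal D_q^r$ alone, or the family $\mathcal D_{q,f}^r$ supplied by part~6 --- so the injection is onto. The main obstacle I anticipate is not an isolated computation but marshalling the l.c.c.\ structural input cleanly, namely that in the limit circle case the boundary Wronskian form has a two-dimensional nondegenerate quotient on which the solutions realize all boundary data; parts~3, 5, and~6 all hinge on this, whereas the cutoff decoupling behind parts~1 and~2 is elementary once the bookkeeping in Definition~\ref{d_bound} is handled with care.
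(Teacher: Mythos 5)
Your reading of the proposition matches its status in the paper: the paper gives \emph{no} proof at all, stating it as a reformulation of well-known results and citing Sec.~9.2 of Teschl, so your proposal supplies an argument the paper omits rather than competing with one. Most of it is sound. The cutoff decoupling works because $l_q(\chi f)=\chi\, l_qf-2\chi'f'-\chi''f$ keeps $\chi f$ in $\mathcal D_q$, and together with item~2 of Definition~\ref{d_bound} it correctly yields parts~1 and~2 (including the identification of the adjoint via~(\ref{anti}) and~(\ref{lq*})). The structural fact you invoke for parts~3 and~6 --- that in the l.c.c.\ the form $(f,g)\mapsto W_a(\bar f,g)$ descends to a nondegenerate form on a two-dimensional quotient spanned by real solutions at any fixed $E$ --- is legitimate, and in fact needs no outside citation: it follows from the paper's own Pl\"ucker identity~(\ref{Plucker}) with $f_3=u_1$, $f_4=u_2$ (real solutions with $W(u_1,u_2)=1$, which lie in $\mathcal D_q^l$ by part~4) upon letting $r\downarrow a$, since this expresses $W_a(\bar g,h)$ entirely through the pairings of $g,h$ with $u_1,u_2$.

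The genuine gap is the surjectivity step in part~5. You argue that $Y\mapsto L_q(\mathcal D_q^l\cap Y)$ is injective and that both sides are ``one-parameter families,'' and conclude ``the injection is onto.'' That is a counting argument and it proves nothing: an injection of the circle of real solutions modulo scaling into von Neumann's circle of self-adjoint extensions need not be surjective, and you have set up no continuity or degree argument that would force it to be. Fortunately your own tools close the gap without counting. Suppose $q$ is in the l.p.c.\ at $a$ and $T$ is a self-adjoint extension of $L_q$. From $L_q\subseteq T=T^*\subseteq L_q^*=L_q(\mathcal D_q)$ you get $T=L_q(Z)$ for some subspace $Z\subseteq\mathcal D_q$, and self-adjointness combined with~(\ref{anti}) and the vanishing of $W_a$ on $\mathcal D_q^l$ says precisely that $Z=\{g\in\mathcal D_q:\,W_b(\bar f,g)=0\mbox{ for all }f\in Z\}$. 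Now set $Y=\{g\in\mathcal D_q^r:\,W_b(\bar f,g)=0\mbox{ for all }f\in Z\}$. Then $Z=\mathcal D_q\cap Y=\mathcal D_q^l\cap Y$; item~2 of Definition~\ref{d_bound} holds for $Y$ because any $g\in\mathcal D_q^r$ that is $W_b$-orthogonal to all of $Y$ is in particular orthogonal to $Z\subseteq Y$ and so lies in $Y$ by definition; and item~1 follows from your cutoff trick: if $\tilde\chi$ equals $1$ near $b$ and $0$ near $a$, then for $g\in Y$ one has $\tilde\chi g\in\mathcal D_q\cap Y=Z$, hence $W_b(\bar g,h)=W_b(\overline{\tilde\chi g},h)=0$ for every $h\in Y$. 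Thus $Y$ is a right boundary space and $T=L_q(\mathcal D_q^l\cap Y)$, with $\mathcal D_q^l$ the unique left boundary space; the case of l.p.c.\ at $b$ is symmetric. Alternatively, you could simply cite the theorem in the same section of Teschl that in the mixed case every self-adjoint extension carries a separated boundary condition at the l.c.c.\ endpoint --- that, plus your part~6, is exactly the translation the paper has in mind.
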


The operators of the form $L_q(X\cap Y)$, where $X$ and $Y$ are left and right boundary spaces, are called self-adjoint extensions of $L_q$ with separated boundary conditions.

\begin{remark}
As mentioned above, boundary spaces can be thought of as self-adjoint boundary conditions. In this sense, the domain of $L_q(X\cap Y)$ consists of (the $\lambda_{a,b}$-equivalence classes of) all elements of $\mathcal D_q$ satisfying the self-adjoint boundary conditions $X$ and $Y$ on the respective left and right.
\end{remark}

\begin{remark}
Let $f$ and $g$ be linear independent solutions of $l_{q,z}f=l_{q,z}g=0$, where $\mathrm{Im}\,z\neq 0$. Suppose $f$ satisfies a self-adjoint boundary condition at $a$ (i.e., belongs to some left boundary space). Let $A$ denote the set of all $\zeta\in\C$ such that $g+\zeta f$ belongs to some right boundary space. Then $A$ is either a one-point set or a circle depending on whether $q$ is in the l.p.c. or l.c.c. at $b$. Moreover, $A$ is the limit of the circles $A_c$ obtained by replacing $b$ with a regular endpoint $c\in (a,b)$ in the definition of $A$. Such a limit procedure was originally used by Weyl~\cite{Weyl} to distinguish between the l.p.c. and l.c.c.
\end{remark}

If $q$ is in the l.p.c. at both $a$ and $b$, then statement~1 in Proposition~\ref{p_bound} implies that the operator $L_q(\mathcal D_q)$ is self-adjoint. In view of~(\ref{lq*}), it follows that $L_q$ is self-adjoint.

For every $f\in\mathcal D_q^l$, we set
\begin{equation}\label{Lqf}
L_q^f = L_q(\mathcal D_{q,f}^l\cap \mathcal D_q^r).
\end{equation}

\begin{lemma}\label{l_lqf}
Let $E\in\R$ and $q$ be in the l.c.c. at $a$ and in the l.p.c. at $b$. Then the self-adjoint extensions of $L_q$ are precisely the operators $L_q^f$, where $f\in S_{q,E}^l$. For $f,g\in S_{q,E}^l$, the equality $L_q^f = L_q^g$ holds if and only if $g = cf$ for some real $c\neq 0$.
\end{lemma}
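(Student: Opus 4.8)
The plan is to assemble the statement from Proposition~\ref{p_bound} together with a short Wronskian argument, using throughout that since $q$ is in the l.p.c.\ at $b$, the space $\mathcal D_q^r$ is the \emph{unique} right boundary space; hence every separated self-adjoint extension must have $\mathcal D_q^r$ as its right boundary space.

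First I would show that each $L_q^f$ with $f\in S_{q,E}^l$ is a self-adjoint extension of $L_q$. By statement~3 of Proposition~\ref{p_bound}, $\mathcal D_{q,f}^l$ is a left boundary space, and $\mathcal D_q^r$ is a right boundary space (being the unique one at the l.p.c.\ endpoint $b$). Statement~1 of Proposition~\ref{p_bound}, applied with $X=\mathcal D_{q,f}^l$ and $Y=\mathcal D_q^r$, then shows that $L_q^f=L_q(\mathcal D_{q,f}^l\cap\mathcal D_q^r)$ is self-adjoint. Conversely, let $T$ be any self-adjoint extension of $L_q$. Since $q$ is in the l.p.c.\ at $b$, statement~5 gives $T=L_q(X\cap Y)$ for some left boundary space $X$ and right boundary space $Y$; uniqueness of the right boundary space forces $Y=\mathcal D_q^r$. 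Because $q$ is in the l.c.c.\ at $a$, statement~6 yields $X=\mathcal D_{q,f}^l$ for some $f\in S_{q,E}^l$, whence $T=L_q^f$. This settles the first assertion.

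For the parametrization, I would first record that any $f\in S_{q,E}^l$ is real, so $\bar f=f$ and $W_a(\bar f,g)=W_a(f,g)$; moreover $f\in\mathcal D_{q,f}^l$ since $W_r(f,f)\equiv 0$. If $g=cf$ with $c\in\R\setminus\{0\}$, then $W_a(\bar g,h)=c\,W_a(\bar f,h)$ for all $h$, so $\mathcal D_{q,g}^l=\mathcal D_{q,f}^l$ and therefore $L_q^g=L_q^f$. For the converse, suppose $L_q^f=L_q^g$, that is, $L_q(\mathcal D_{q,f}^l\cap\mathcal D_q^r)=L_q(\mathcal D_{q,g}^l\cap\mathcal D_q^r)$. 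Applying the uniqueness statement~2 of Proposition~\ref{p_bound} to these two representations (both $\mathcal D_{q,f}^l$ and $\mathcal D_{q,g}^l$ are left boundary spaces by statement~3, and $\mathcal D_q^r$ is a right boundary space) gives $\mathcal D_{q,f}^l=\mathcal D_{q,g}^l$. Since $g\in\mathcal D_{q,g}^l=\mathcal D_{q,f}^l$, this produces the boundary identity $W_a(f,g)=0$.

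Finally I would convert the boundary identity into linear dependence. As $f$ and $g$ both solve $l_{q,E}h=0$, the Wronskian $r\mapsto W_r(f,g)$ is constant, so $W(f,g)=W_a(f,g)=0$; a vanishing Wronskian of two solutions of the second-order equation $l_{q,E}h=0$ forces $f$ and $g$ to be linearly dependent, say $g=cf$. Nontriviality of $g$ gives $c\neq 0$, and since $f,g$ are real and $f$ does not vanish identically, $c=g(r)/f(r)$ is real at any point with $f(r)\neq 0$. The only step with genuine content beyond bookkeeping is this last Wronskian/linear-dependence argument, which is standard; the real care goes into matching the hypotheses of the six statements of Proposition~\ref{p_bound}, and the point most likely to cause a slip is invoking the uniqueness statement~2 with the correct forced identification $Y=\tilde Y=\mathcal D_q^r$ coming from the l.p.c.\ at $b$.
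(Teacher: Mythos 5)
Your proof is correct and follows essentially the same route as the paper: the first assertion is assembled from statements 1, 3, 5, and 6 of Proposition~\ref{p_bound} (with the l.p.c.\ at $b$ forcing $Y=\mathcal D_q^r$), and the parametrization is settled by statement 2 plus the constancy-of-Wronskian argument giving $W(f,g)=W_a(f,g)=0$ and hence $g=cf$. The paper's proof is just a terser version of the same argument, so there is nothing to add.
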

\begin{proof}
The first statement follows immediately from statements~1, 3, 5, and~6 in Proposition~\ref{p_bound}. Let $f,g\in S_{q,E}^l$. If $g = cf$, then we have
\begin{equation}\label{eqD}
\mathcal D^l_{q,f} = \mathcal D^l_{q,g}
\end{equation}
by~(\ref{dqf}) and therefore $L_q^f = L_q^g$. Conversely, if $L_q^f = L_q^g$, then statements~2 and~3 in Proposition~\ref{p_bound} imply equality~(\ref{eqD}). Because $f\in \mathcal D_{q,f}^l$ by~(\ref{dqf}), we conclude that $f\in \mathcal D_{q,g}^l$ and hence $W_a(g,f)=0$. Because $l_{q,z}f=l_{q,z}g=0$, it follows that $W(g,f)=0$, whence $g = c f$.
\end{proof}

We now consider the eigenfunction expansions associated with $L_q$.

Let $O\subset \C$ be an open set. We say that a map $u\colon O\to \mathcal D$ is a $q$-solution in $O$ if $l_{q,z}u(z) = 0$ for every $z\in O$. A $q$-solution $u$ in $O$ is said to be analytic if the functions $z\to u(z|r)$ and $z\to \partial_r u(z|r)$ are analytic in $O$ for any $r\in (a,b)$. A $q$-solution $u$ in $O$ is said to be nonvanishing if $u(z)\neq 0$ for every $z\in O$ and is said to be real if $u(E)$ is real for every $E\in O\cap\R$.

\begin{definition}\label{d_triple}
A triple $(q,Y,u)$ is called an expansion triple if $q$ is a real locally integrable function on $(a,b)$, $Y$ is a right boundary space, and $u$ is a real nonvanishing analytic $q$-solution in $\C$ satisfying the following conditions:
\begin{itemize}
\item[1.] $u(z)\in \mathcal D_q^l$ for all $z\in\C$ and
\item[2.] there exists $E\in\R$ such that $W_a(u(E),u(z))=0$ for all $z\in\C$.
\end{itemize}
\end{definition}

\begin{lemma}\label{l_triple}
Let $\mathfrak t = (q,Y,u)$ be an expansion triple. Then there is a unique left boundary space $X^{\mathfrak t}$ such that $u(z)\in X^{\mathfrak t}$ for all $z\in \C$. For every $E\in\R$, we have $X^{\mathfrak t} = \mathcal D_{q,u(E)}^l$.
\end{lemma}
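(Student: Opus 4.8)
The plan is to construct the candidate $X^{\mathfrak t}$ explicitly from the distinguished energy supplied by condition~2 of the expansion triple, and then to reduce both uniqueness and $E$-independence to one elementary fact about boundary spaces: if one left boundary space is contained in another, they coincide.

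First I would fix the real number $E_0$ given by condition~2, so that $W_a(u(E_0),u(z))=0$ for all $z\in\C$. Since $u$ is real, nonvanishing, and a $q$-solution, and since $u(E_0)\in\mathcal D_q^l$ by condition~1, the function $u(E_0)$ is a nontrivial real element of $\mathcal D_q^l$ annihilated by $l_{q,E_0}$, i.e.\ $u(E_0)\in S_{q,E_0}^l$. Statement~3 of Proposition~\ref{p_bound} then guarantees that $\mathcal D_{q,u(E_0)}^l$ is a left boundary space, so I would set $X^{\mathfrak t}:=\mathcal D_{q,u(E_0)}^l$. As $u(E_0)$ is real, the vanishing $W_a(u(E_0),u(z))=0$ from condition~2 is exactly the membership condition $u(z)\in\mathcal D_{q,u(E_0)}^l$, so $X^{\mathfrak t}$ contains $u(z)$ for every $z\in\C$, which settles existence.

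Next I would isolate the auxiliary fact: if $X\subseteq X'$ are left boundary spaces, then $X=X'$. This is immediate from Definition~\ref{d_bound}: for $g\in X'$, axiom~1 for $X'$ gives $W_a(\bar f,g)=0$ for all $f\in X'$, hence for all $f\in X$; axiom~2 for $X$ then forces $g\in X$. Uniqueness follows at once: if $X'$ is a left boundary space containing every $u(z)$, then $u(E_0)\in X'$, so axiom~1 for $X'$ (together with the realness of $u(E_0)$) gives $W_a(u(E_0),g)=0$ for all $g\in X'$, whence $g\in\mathcal D_{q,u(E_0)}^l=X^{\mathfrak t}$; thus $X'\subseteq X^{\mathfrak t}$, and the auxiliary fact yields $X'=X^{\mathfrak t}$.

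Finally, for the last assertion I would prove $X^{\mathfrak t}=\mathcal D_{q,u(E)}^l$ for an arbitrary real $E$. The same reasoning as for $E_0$ gives $u(E)\in S_{q,E}^l$, so $\mathcal D_{q,u(E)}^l$ is a left boundary space by Proposition~\ref{p_bound}. Since $u(E)\in X^{\mathfrak t}$ and $X^{\mathfrak t}$ is a left boundary space, axiom~1 (and realness of $u(E)$) gives $W_a(u(E),g)=0$ for every $g\in X^{\mathfrak t}$, i.e.\ $X^{\mathfrak t}\subseteq\mathcal D_{q,u(E)}^l$, and the auxiliary fact gives equality. The main obstacle, if any, is precisely that condition~2 supplies the Wronskian vanishing at a single energy $E_0$ only, so the extension to all real $E$ cannot invoke condition~2 directly and must instead route through the boundary-space axioms --- which is exactly the work done by the containment-implies-equality fact.
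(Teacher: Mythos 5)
Your proof is correct and follows essentially the same route as the paper: both arguments rest on statement~3 of Proposition~\ref{p_bound}, the identification of condition~2 in Definition~\ref{d_triple} with membership $u(z)\in\mathcal D_{q,u(E_0)}^l$ (using realness of $u$), and the two boundary-space axioms — axiom~1 to get one inclusion between boundary spaces and axiom~2 to reverse it. Your packaging of that last step as a standalone ``nested left boundary spaces coincide'' lemma is only a reorganization of the paper's bidirectional containment argument, not a different method.
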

\begin{proof}
Let $E\in \R$ and $X$ be a left boundary space containing $u(E)$. By~(\ref{dqf}) and condition~(1) in Definition~\ref{d_bound}, we have $X\subset \mathcal D_{q,u(E)}^l$. On the other hand, if $g\in \mathcal D_{q,u(E)}^l$, then we have $W_a(\bar f,g)=0$ for every $f\in X$ because $\mathcal D_{q,u(E)}^l$ is a left boundary space by statement~3 in Proposition~\ref{p_bound}. In view of condition~(2) in Definition~\ref{d_bound}, we conclude that $g\in X$ and hence $X=\mathcal D_{q,u(E)}^l$. This implies that $X^{\mathfrak t}$ (if it exists) is unique and equal to $\mathcal D_{q,u(E)}^l$ for all $E\in\R$.
By~(\ref{dqf}) and Definition~\ref{d_triple}, there exists $E\in\R$ such that $u(z)\in \mathcal D_{q,u(E)}^l$ for all $z\in\C$. This proves the existence of $X^{\mathfrak t}$.
\end{proof}

Let $\mathfrak t = (q,Y,u)$ be an expansion triple, $\tilde u$ be a real analytic $q$-solution in $\C$ such that $W(u(z),\tilde u(z))\neq 0$ for every $z\in \C$, and $v$ be a nonvanishing analytic $q$-solution in $\C_+$\footnote{As usual, $\C_+$ denotes the open upper half-plane of the complex plane: $\C_+=\{z\in\C: \mathrm{Im}\,z>0\}$.} such that $v(z)\in Y$ for all $z\in\C_+$ (such $\tilde u$ and $v$ always exist; see Lemma~2.4 in~\cite{KST} and Lemma~9.8 in~\cite{Teschl}). Then $W(v(z),u(z))\neq 0$ for every $z\in \C_+$ because we would otherwise have $u(z)\in X^{\mathfrak t}\cap Y$ and hence the self-adjoint operator $L_q(X^{\mathfrak t}\cap Y)$ would have an eigenvalue in $\C_+$.
We define the analytic function $\mathcal M^{\mathfrak t}_{\tilde u}$ in $\C_+$ by the relation
\begin{equation}\label{mathcalM}
 \mathcal M^{\mathfrak t}_{\tilde u}(z) = \frac{1}{\pi}\frac{W(v(z),\tilde u(z))}{W(v(z),u(z))W(u(z),\tilde u(z))}
\end{equation}
(this definition is obviously independent of the choice of $v$). Following~\cite{KST}, we call such functions singular Titchmarsh-Weyl $m$-functions. Below, we see that it is sometimes useful to consider $q$-solutions $\tilde u$ that are defined on an open set $O\subset \C$ other than the entire complex plane. In that case, we assume that $\mathcal M^{\mathfrak t}_{\tilde u}$ is defined on $O\cap \C_+$.

Let $L_2^c(a,b)$ denote the subspace of $L_2(a,b)$ consisting of all its elements vanishing $\lambda$-a.e. outside some compact subset of $(a,b)$. The next proposition gives a way of constructing eigenfunction expansions for self-adjoint extensions of $L_q$ with separated boundary conditions.

\begin{proposition}\label{t_eig}
Let $\mathfrak t = (q,Y,u)$ be an expansion triple. Then the following statements hold:
\begin{itemize}
\item[$1.$] There exists a unique positive Radon measure $\sigma$ on $\R$ (called the spectral measure for $\mathfrak t$) such that
\[
\int \varphi(E)\, \mathrm{Im}\,\mathcal M^{\mathfrak t}_{\tilde u}(E+i\eta)\,dE\to \int \varphi(E)\,d\sigma(E)\quad (\eta\downarrow 0)
\]
for every continuous function $\varphi$ on $\R$ with compact support and every real analytic $q$-solution $\tilde u$ in $\C$ such that $W(u(z),\tilde u(z))\neq 0$ for every $z\in\C$.
\item[$2.$] Let $\sigma$ be the spectral measure for $\mathfrak t$. There is a unique unitary operator $U\colon L_2(a,b)\to L_2(\R,\sigma)$ (called the spectral transformation for $\mathfrak t$) such that
\begin{equation}\nonumber
(U\psi)(E) = \int_a^b u(E|r)\psi(r)\,dr,\quad \psi\in L_2^c(a,b),
\end{equation}
for $\sigma$-a.e. $E$.
\item[$3.$] Let $\sigma$ and $U$ be the spectral measure and transformation for $\mathfrak t$, and let the left boundary space $X^{\mathfrak t}$ be as in Lemma~$\ref{l_triple}$. Then we have
\begin{equation}\nonumber
L_q(X^{\mathfrak t}\cap Y) = U^{-1}\mathcal T^\sigma_\iota U,
\end{equation}
where $\iota$ is the identity function on $\R$.
\item[$4.$] Let $\sigma$ be the spectral measure for $\mathfrak t$, $O\subset\C$ be an open set, and $\tilde u$ be a real analytic $q$-solution in $O$ such that $W(u(z),\tilde u(z))\neq 0$ for every $z\in O$. Then we have
    \[
    \int_{\mathrm{supp}\,\varphi} \varphi(E)\, \mathrm{Im}\,\mathcal M^{\mathfrak t}_{\tilde u}(E+i\eta)\,dE\to \int_{O\cap\R} \varphi(E)\,d\sigma(E)\quad (\eta\downarrow 0)
    \]
    for every continuous function $\varphi$ on $O\cap\R$ with compact support ($\mathrm{supp}\,\varphi$ denotes the support of $\varphi$).
\end{itemize}
\end{proposition}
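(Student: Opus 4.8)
The plan is to diagonalize the self-adjoint operator $H=L_q(X^{\mathfrak t}\cap Y)$ (self-adjoint by statement~1 of Proposition~\ref{p_bound}, with $X^{\mathfrak t}$ the left boundary space from Lemma~\ref{l_triple}) through its resolvent, and to read off the spectral data from the boundary values of $\mathcal M^{\mathfrak t}_{\tilde u}$. For $z\in\C_+$ the solution $u(z)$ lies in $X^{\mathfrak t}\subset\mathcal D_q^l$ and the Weyl solution $v(z)\in Y\subset\mathcal D_q^r$, so $(H-z)^{-1}$ is the integral operator with kernel $-u(z|r\wedge s)\,v(z|r\vee s)/W(u(z),v(z))$, where $r\wedge s=\min(r,s)$ and $r\vee s=\max(r,s)$. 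Normalizing $v(z)=\tilde u(z)+m(z)u(z)$ (permissible since $\mathcal M^{\mathfrak t}_{\tilde u}$ does not depend on the scale of $v$) and using $W(u(z),v(z))=W(u(z),\tilde u(z))$ together with~(\ref{mathcalM}), I would compute, for $\psi\in L_2^c(a,b)$,
\[
\langle (H-z)^{-1}\psi,\psi\rangle=\pi\,\mathcal M^{\mathfrak t}_{\tilde u}(z)\,\hat\psi(z)\,\overline{\hat\psi(\bar z)}+R_\psi(z),\qquad \hat\psi(z)=\int_a^b u(z|r)\psi(r)\,dr .
\]
Here both $\hat\psi(z)\overline{\hat\psi(\bar z)}$ and $R_\psi$ (which collects the $\tilde u$-contribution) are entire in $z$, the remainder $R_\psi$ is real on $\R$, and $\hat\psi(z)\overline{\hat\psi(\bar z)}=|\hat\psi(E)|^2\ge0$ for $z=E\in\R$; the split follows by substituting $v=\tilde u+mu$ into the kernel and using $u(z|r\wedge s)\,u(z|r\vee s)=u(z|r)u(z|s)$. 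Since $\mathrm{Im}\,\langle (H-z)^{-1}\psi,\psi\rangle=\mathrm{Im}(z)\,\|(H-z)^{-1}\psi\|^2\ge0$, this identity drives all four statements.

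For statement~1 I would feed this identity into Stone's formula, which recovers the scalar spectral measure $\mu_\psi$ of $\psi$ for $H$ as the weak limit of $\pi^{-1}\mathrm{Im}\,\langle (H-(E+i\eta))^{-1}\psi,\psi\rangle\,dE$ as $\eta\downarrow0$. Because $R_\psi$ is real on $\R$, its boundary imaginary part vanishes in this limit, so $d\mu_\psi(E)=|\hat\psi(E)|^2\,d\sigma(E)$, where $\sigma$ is obtained by stripping off the factor $|\hat\psi|^2$; concretely $\sigma$ is the positive functional $\varphi\mapsto\lim_{\eta\downarrow0}\int\varphi(E)\,\mathrm{Im}\,\mathcal M^{\mathfrak t}_{\tilde u}(E+i\eta)\,dE$, whose positivity and local finiteness follow by covering $\mathrm{supp}\,\varphi$ by finitely many intervals on each of which some $\psi$ has $\hat\psi$ nonvanishing, after which the Riesz representation theorem produces the Radon measure. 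Uniqueness is immediate. Independence of $\tilde u$ is the decisive algebraic point: any other admissible solution has the form $\tilde u_1=\alpha u+\beta\tilde u$ with $\alpha,\beta$ analytic and real on $\R$ and $\beta$ nowhere zero (because $W(u,\tilde u_1)=\beta\,W(u,\tilde u)$ never vanishes), and a Wronskian computation gives $\mathcal M^{\mathfrak t}_{\tilde u_1}-\mathcal M^{\mathfrak t}_{\tilde u}=(1/\pi)\,\alpha/(\beta\,W(u,\tilde u))$, which is real on $\R$ and whose boundary imaginary part therefore vanishes; the same $\sigma$ thus results for every admissible $\tilde u$.

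Statements~2 and~3 rest on the factorization $d\mu_\psi=|\hat\psi|^2\,d\sigma$. Parseval's identity $\|\psi\|_{L_2(a,b)}^2=\mu_\psi(\R)=\int_\R|\hat\psi(E)|^2\,d\sigma(E)$ shows that $\psi\mapsto\hat\psi=:U\psi$ is isometric on the dense subspace $L_2^c(a,b)$ and thus extends to an isometry $U\colon L_2(a,b)\to L_2(\R,\sigma)$; the same resolvent computation yields the intertwining relation $U(H-z)^{-1}\psi=(\iota-z)^{-1}(U\psi)$ for $z\in\C_+$, which by the functional calculus amounts to $H=U^{-1}\mathcal T^\sigma_\iota U$ once $U$ is known to be unitary. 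Proving surjectivity of $U$ is the step I expect to be the main obstacle. I would argue that $\mathrm{ran}(U)$ is closed and, by the intertwining relation, invariant under $\mathcal T^\sigma_{(\iota-z)^{-1}}$ for every $z\in\C_+$ and hence under every bounded Borel multiplication, so that $\mathrm{ran}(U)=\mathbf 1_A L_2(\R,\sigma)$ for some Borel set $A$. If $\sigma(\R\setminus A)>0$, then, using separability of $L_2^c(a,b)$, there would be a point $E\in\R\setminus A$ with $\hat\psi(E)=0$ for all $\psi$, forcing $u(E)=0$ and contradicting that $u$ is nonvanishing; therefore $A=\R$ up to a $\sigma$-null set and $U$ is onto. (Completeness may alternatively be imported from the Weyl--Kodaira theory; cf.\ Remark~\ref{reig}.)

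Statement~4 is the local counterpart of statement~1 and follows from the same Wronskian algebra, which is why I expect it to be routine once the global theory is in place. Given $\tilde u$ defined only on an open set $O$ with $W(u(z),\tilde u(z))\neq0$ there, and a globally admissible $\tilde u_0$ furnished by statement~1, on $O$ one writes $\tilde u=\alpha u+\beta\tilde u_0$ with $\alpha,\beta$ analytic on $O$, real on $O\cap\R$, and $\beta$ nowhere zero on $O$; consequently $\mathcal M^{\mathfrak t}_{\tilde u}-\mathcal M^{\mathfrak t}_{\tilde u_0}=(1/\pi)\,\alpha/(\beta\,W(u,\tilde u_0))$ is analytic on $O$ and real on $O\cap\R$. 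For $\varphi$ continuous with compact support in $O\cap\R$, the boundary imaginary part of this difference tends to $0$ uniformly on $\mathrm{supp}\,\varphi$, so by statement~1 the limit of $\int_{\mathrm{supp}\,\varphi}\varphi(E)\,\mathrm{Im}\,\mathcal M^{\mathfrak t}_{\tilde u}(E+i\eta)\,dE$ equals $\int_{O\cap\R}\varphi\,d\sigma$, which is statement~4. The genuine difficulties are thus concentrated in the Green-function factorization that isolates $\mathrm{Im}\,\mathcal M^{\mathfrak t}_{\tilde u}$ and in the completeness of the expansion; the localization in statement~4 is then a comparatively routine consequence.
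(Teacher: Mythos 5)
Your handling of statement~4 is essentially the paper's own proof: writing $\tilde u=\alpha u+\beta\tilde u_0$ and computing the difference of the two $m$-functions is exactly the identity the paper extracts from the Pl\"ucker relation~(\ref{Plucker}), with $\alpha=W(\tilde u,\tilde u_0)/W(u,\tilde u_0)$ and $\beta=W(u,\tilde u)/W(u,\tilde u_0)$, and the conclusion via ``analytic in $O$, real on $O\cap\R$'' is the same. For statements~1--3, however, the paper does not argue from scratch: it records that they are a reformulation of the corresponding results of \cite{KST} (cf.\ also \cite{GesztesyZinchenko,Kodaira}). You instead try to reprove them by the resolvent--Stone's-formula route, which is indeed the route taken in that literature, but your sketch passes silently over the one step that constitutes its real technical content.

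Here is the gap. Stone's formula gives weak convergence of $\pi^{-1}\mathrm{Im}\,\langle\psi,(H-(E+i\eta))^{-1}\psi\rangle\,dE$ to $d\mu_\psi$, and your Green-function splitting gives $\langle\psi,(H-z)^{-1}\psi\rangle=\pi\,\mathcal M^{\mathfrak t}_{\tilde u}(z)G_\psi(z)+R_\psi(z)$ with $G_\psi(z)=\hat\psi(z)\overline{\hat\psi(\bar z)}$. You correctly discard $\mathrm{Im}\,R_\psi$ (it is $O(\eta)$ locally uniformly), but
\[
\mathrm{Im}\bigl[\mathcal M^{\mathfrak t}_{\tilde u}\,G_\psi\bigr]
=\mathrm{Im}\bigl[\mathcal M^{\mathfrak t}_{\tilde u}\bigr]\,\mathrm{Re}\,G_\psi
+\mathrm{Re}\bigl[\mathcal M^{\mathfrak t}_{\tilde u}\bigr]\,\mathrm{Im}\,G_\psi ,
\]
and the second term is never addressed. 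It cannot be dropped for free: $\mathrm{Im}\,G_\psi(E+i\eta)=O(\eta)$, but the singular $m$-function $\mathcal M^{\mathfrak t}_{\tilde u}$ is in general \emph{not} Herglotz, so there is no a priori control of $\mathrm{Re}\,\mathcal M^{\mathfrak t}_{\tilde u}(E+i\eta)$ as $\eta\downarrow 0$. This is not a hypothetical worry in the present paper: for the Bessel triple $\mathfrak t_\kappa$ the computation in Lemma~\ref{l_spec} produces $m$-functions such as $iz^\kappa/2$, whose imaginary part changes sign in $\C_+$, and for $\kappa\geq 1$ the spectral measure $\mathcal V_\kappa$ itself violates $\int(1+E^2)^{-1}d\mathcal V_\kappa(E)<\infty$, so \emph{no} admissible $\tilde u$ can yield a Herglotz $\mathcal M^{\mathfrak t_\kappa}_{\tilde u}$. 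Consequently, neither the existence of the weak limit of $\mathrm{Im}\,\mathcal M^{\mathfrak t}_{\tilde u}(E+i\eta)\,dE$ nor the factorization $d\mu_\psi=|\hat\psi|^2\,d\sigma$ follows from what you wrote; establishing that the cross term vanishes weakly (via a local bound $|\mathcal M^{\mathfrak t}_{\tilde u}(E+i\eta)|\leq C/\eta$ on intervals where $|\hat\psi|$ is bounded below, extracted from the Herglotz property of the resolvent form, together with an averaging argument) is precisely the work carried out in \cite{KST,GesztesyZinchenko}, which the paper cites rather than redoes. A smaller unproved step of the same kind: in statement~3 the intertwining relation $U(H-z)^{-1}\psi=(\iota-z)^{-1}U\psi$ does not come from ``the same resolvent computation,'' since $(H-z)^{-1}\psi$ is not compactly supported and $U$ acts on it only through an $L_2$-limit. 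By contrast, the surjectivity argument you single out as the main obstacle is essentially sound.
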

\begin{proof}
Statements~1--3 are a straightforward reformulation of the corresponding results in~\cite{KST} in the language of boundary spaces. Let $O$ and $\tilde u$ satisfy the conditions in statement~4 and $\theta$ be a real analytic $q$-solution in $\C$ such that $W(u(z),\theta(z))\neq 0$ for every $z\in\C$. Substituting $f_1 = u(z)$, $f_2 = v(z)$, $f_3 = \tilde u(z)$, and $f_4 = \theta(z)$ in~(\ref{Plucker}) and dividing the result by $\pi W(u(z),v(z))W(u(z),\theta(z))W(u(z),\tilde u(z))$ yields
\[
\mathcal M^{\mathfrak t}_{\tilde u}(z) =\mathcal M^{\mathfrak t}_{\theta}(z) + \frac{1}{\pi}\frac{W(\tilde u(z),\theta(z))}{W(u(z),\theta(z))W(u(z),\tilde u(z))}
\]
for any $z\in O\cap\C_+$. Statement~4 now follows from statement~1 because the last term in the right-hand side is analytic in $O$ and real on $O\cap\R$.
\end{proof}

\begin{corollary}\label{coreig}
Let $\sigma$ and $U$ be the spectral measure and transformation for an expansion triple $\mathfrak t=(q,Y,u)$. Then we have
\begin{equation}\label{v-1}
(U^{-1}\varphi)(r) = \int u(E|r) \varphi(E)\,d\sigma(E), \quad \varphi\in L_2^c(\R,\sigma),
\end{equation}
for $\lambda$-a.e. $r\in (a,b)$. If $\sigma(\{E\})\neq 0$ for some $E\in\R$, then $[u(E)]$ is an eigenfunction of $L_q(X^{\mathfrak t}\cap Y)$.
\end{corollary}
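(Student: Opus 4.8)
The plan is to derive the inversion formula~(\ref{v-1}) directly from the unitarity of $U$ and then read off the eigenfunction assertion as an immediate consequence. Since $U$ is unitary, we have $U^{-1}=U^*$, so for $\varphi\in L_2^c(\R,\sigma)$ I would compute $U^*\varphi$ by testing against the dense subspace $L_2^c(a,b)$, on which the action of $U$ is given explicitly by the formula in statement~2 of Proposition~\ref{t_eig}.

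First I would fix $\varphi\in L_2^c(\R,\sigma)$ and $\psi\in L_2^c(a,b)$ and expand $\langle U\psi,\varphi\rangle_{L_2(\R,\sigma)}$. Using $(U\psi)(E)=\int_a^b u(E|r)\psi(r)\,dr$ together with the fact that $u$ is a real $q$-solution—so that $u(E|r)$ is real for $E\in\R$ and the complex conjugation may be dropped—this equals $\int\left(\int_a^b u(E|r)\overline{\psi(r)}\,dr\right)\varphi(E)\,d\sigma(E)$. The next step is to interchange the order of integration. This is legitimate because $\psi$ vanishes outside a compact subset of $(a,b)$ and $\varphi$ outside a compact subset of $\R$, while $(E,r)\mapsto u(E|r)$ is continuous and hence bounded on the relevant product of compact sets, and both $\lambda$ and the Radon measure $\sigma$ are finite there. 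After interchanging I obtain $\langle\psi,[g]\rangle_{L_2(a,b)}$, where $g(r)=\int u(E|r)\varphi(E)\,d\sigma(E)$ is a continuous function of $r$, again by dominated convergence over the compact support of $\varphi$. Since $U^{-1}=U^*$ yields $\langle\psi,U^{-1}\varphi\rangle=\langle U\psi,\varphi\rangle=\langle\psi,[g]\rangle$ for every $\psi$ in the dense subspace $L_2^c(a,b)$, testing against indicators of compact subintervals shows that $U^{-1}\varphi$ and $[g]$ agree $\lambda$-a.e., which is exactly~(\ref{v-1}).

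For the second assertion I would invoke statement~3 of Proposition~\ref{t_eig}, which identifies $L_q(X^{\mathfrak t}\cap Y)$ with $U^{-1}\mathcal T^\sigma_\iota U$. If $\sigma(\{E\})\neq 0$, then the indicator $\mathbf{1}_{\{E\}}$ is a nonzero element of $L_2(\R,\sigma)$ lying in the domain of $\mathcal T^\sigma_\iota$ and satisfying $\mathcal T^\sigma_\iota\mathbf{1}_{\{E\}}=E\,\mathbf{1}_{\{E\}}$; hence $U^{-1}\mathbf{1}_{\{E\}}$ is an eigenvector of $L_q(X^{\mathfrak t}\cap Y)$ with eigenvalue $E$. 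Applying~(\ref{v-1}) with $\varphi=\mathbf{1}_{\{E\}}\in L_2^c(\R,\sigma)$ gives $U^{-1}\mathbf{1}_{\{E\}}=\sigma(\{E\})[u(E)]$, a nonzero scalar multiple of $[u(E)]$, nonzero because $u$ is nonvanishing. Therefore $[u(E)]$ is itself an eigenfunction.

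I expect no serious obstacle here: the content is essentially bookkeeping around the unitarity of $U$. The one point requiring genuine care is the Fubini interchange together with the a.e.\ identification of $U^{-1}\varphi$ with the continuous representative $g$, but the compact-support hypotheses on both $\psi$ and $\varphi$ reduce this to integration over products of compact sets where everything is bounded. The only conceptually essential input is the reality of the solution $u$, which is what allows the same kernel $u(E|r)$ to appear, without conjugation, in both $U$ and its inverse.
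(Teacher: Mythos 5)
Your proof is correct and follows essentially the same route as the paper's: both derive~(\ref{v-1}) by writing $\langle \psi, U^{-1}\varphi\rangle = \langle U\psi,\varphi\rangle$, expanding via statement~2 of Proposition~\ref{t_eig}, interchanging the integrals (legitimate by the compact supports), and testing against the dense subspace $L_2^c(a,b)$; the eigenfunction claim is then obtained in both cases by taking $\varphi$ to be the indicator of $\{E\}$ and invoking statement~3 of Proposition~\ref{t_eig}. Your version merely makes explicit the Fubini justification and the use of the reality of $u$, which the paper leaves implicit.
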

\begin{proof}
Given $\varphi\in L_2^c(\R,\sigma)$ and $r\in(a,b)$, let $\check \varphi(r)$ denote the right-hand side of~(\ref{v-1}). By statement~2 in Proposition~\ref{t_eig}, we have
\[
\langle \psi, U^{-1}\varphi\rangle = \langle U\psi, \varphi\rangle = \int d\sigma(E) \varphi(E)\int_a^b \overline{\psi(r)} u(E|r)\,dr = \int_a^b \overline{\psi(r)}\check \varphi(r)\,dr
\]
for any $\psi\in L^c_2(a,b)$, whence (\ref{v-1}) follows. In particular, we have $U^{-1}[\chi_{\{E\}}]_\sigma = \sigma(\{E\})[u(E)]$, where $\chi_{\{E\}}$ is the characteristic function of the one-point set $\{E\}$. By statement~3 in Proposition~\ref{t_eig}, this implies that $[u(E)]$ is an eigenfunction of $L_q(X^{\mathfrak t}\cap Y)$ if $\sigma(\{E\})\neq 0$.
\end{proof}

\begin{remark}\label{reig}
While the above proof of Proposition~\ref{t_eig} refers to~\cite{KST}, this result
can also be easily derived using Kodaira's general approach~\cite{Kodaira} based on matrix-valued measures. Indeed, if we set $s_1(z)=\tilde u(z)/W(u(z),\tilde u(z))$ and $s_2(z) = u(z)$ for $z\in\C$, then the only nonreal entry $M_{22}(z)$ of the characteristic matrix $M$ defined by formula~(1.13) in~\cite{Kodaira} is equal to $\pi\mathcal M^{\mathfrak t}_{\tilde u}(z)$ and statements~1--3 in Proposition~\ref{t_eig} hence essentially coincide with Theorem~1.3 in~\cite{Kodaira} in this case. The simple direct proof given in~\cite{KST} employs a single $m$-function and does not involve matrix-valued measures. It essentially relies on the technique developed in~\cite{GesztesyZinchenko}, where potentials in the l.p.c. at both endpoints were considered (a treatment in the same spirit for the l.c.c. at one of the endpoints can be found in~\cite{BennewitzEveritt}). A similar approach to finding spectral measures was also proposed in~\cite{GTV2010} in the context of the Schr\"odinger equation with the inverse-square potential.
\end{remark}

If $q$ is locally square-integrable on $(a,b)$, then formulas~(\ref{L_q}) and~(\ref{D^0_q}) imply that the space $C_0^\infty(a,b)$ of smooth functions on $(a,b)$ with compact support is contained in $\mathcal D_q^0$ and $L_q$ is an extension of $L_q(C_0^\infty(a,b))$. The proof of the next lemma is given in Appendix~\ref{app}.

\begin{lemma}\label{ll}
Let $q$ be a real locally square-integrable function on $(a,b)$. Then $L_q$ is the closure of $L_q(C_0^\infty(a,b))$.
\end{lemma}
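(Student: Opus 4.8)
The plan is to reduce the statement to an equality of adjoint operators and then invoke a regularity result for distributional solutions of the equation $l_q g = h$. Write $L_0 = L_q(C_0^\infty(a,b))$. Since $q$ is locally square-integrable, the remark preceding the lemma gives $C_0^\infty(a,b)\subseteq \mathcal D_q^0$, so $L_0\subseteq L_q$; because $L_q$ is closed, its closure satisfies $\overline{L_0}\subseteq L_q$, and it remains only to prove the reverse inclusion.

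Because $C_0^\infty(a,b)$ is dense in $L_2(a,b)$, the operator $L_0$ is densely defined, and being a restriction of the closable operator $L_q$ it is itself closable with $\overline{L_0}=L_0^{**}$. As $L_q$ is closed, $L_q=L_q^{**}$. Hence it suffices to prove $L_0^*=L_q^*$: taking adjoints then yields $\overline{L_0}=L_0^{**}=L_q^{**}=L_q$. The inclusion $L_q^*\subseteq L_0^*$ is automatic from $L_0\subseteq L_q$, so the crux is to show $L_0^*\subseteq L_q^* = L_q(\mathcal D_q)$, the last equality being~(\ref{lq*}).

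To establish this, I would take $[g]\in D_{L_0^*}$ and a representative $h$ of $L_0^*[g]$. Unwinding the definition of the adjoint, the relation $\langle L_0[\phi],[g]\rangle = \langle [\phi],[h]\rangle$ for all $\phi\in C_0^\infty(a,b)$ reads $\int_a^b(-\phi''+q\phi)\bar g\,dr = \int_a^b \phi\bar h\,dr$; since $q$ is real (so that $l_q$ is formally symmetric), this says exactly that $g$ satisfies $-g''+qg=h$ in the sense of distributions on $(a,b)$. The key regularity step is then: since $g\in L_2(a,b)\subseteq L^1_{\mathrm{loc}}(a,b)$ and $q\in L^2_{\mathrm{loc}}(a,b)$, the Cauchy--Schwarz inequality gives $qg\in L^1_{\mathrm{loc}}(a,b)$, whence $g''=qg-h\in L^1_{\mathrm{loc}}(a,b)$ as a distribution. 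A distribution with locally integrable second derivative agrees $\lambda$-a.e. with a function in $\mathcal D$ (integrate $g''$ twice from a base point to produce an explicit element of $\mathcal D$ with the same second derivative, and observe that the difference is affine, hence also in $\mathcal D$). Choosing this representative, $-g''+qg=h$ holds $\lambda$-a.e., so $l_q g=h$; as $g,h\in L_2(a,b)$ we conclude $[g]\in D_{L_q(\mathcal D_q)}$ with $L_q(\mathcal D_q)[g]=[h]=L_0^*[g]$, proving $L_0^*\subseteq L_q(\mathcal D_q)$.

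The only non-formal ingredient is this regularity step, and it is precisely where local square-integrability of $q$ (rather than mere local integrability) is used: it guarantees $qg\in L^1_{\mathrm{loc}}(a,b)$, which is what makes the bootstrap from a distributional solution to a classical element of $\mathcal D$ go through. I expect this to be the main point requiring care; everything else is routine manipulation of adjoints together with the standard fact about twice distributionally differentiable functions with $L^1_{\mathrm{loc}}$ second derivative.
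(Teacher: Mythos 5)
Your proposal is correct and follows essentially the same route as the paper: both reduce the lemma to the adjoint identity $L_0^*=L_q^*$ (using closedness of $L_q$ and equation~(\ref{lq*})), derive the distributional equation $-g''+qg=h$ from the adjoint relation, and use local square-integrability of $q$ to get $qg\in L^1_{\mathrm{loc}}$. Your ``standard regularity fact'' is proved in the paper by exactly the device you sketch --- integrating $\psi-q\phi$ twice from a base point and observing that the difference has vanishing distributional second derivative, hence is affine --- so the two arguments coincide in substance.
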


\section{Eigenfunction expansions for inverse-square potential}
\label{s_eig}

We now assume that $a=0$ and $b=\infty$ and apply the above general theory to the potential $q_\kappa$ given by (\ref{qkappa}). It follows immediately from~(\ref{checkhkappa}) and~(\ref{L_q(Z)}) that $\check h_\kappa = L_{q_\kappa}(C^\infty_0(\R_+))$. In view of (\ref{hkappadef}) and Lemma~\ref{ll}, this implies that
\begin{equation}\label{hkappa}
h_\kappa = L_{q_\kappa},\quad \kappa\in\R.
\end{equation}

The equation $l_{q_\kappa} f = 0$ has linearly independent solutions $r^{1/2\pm\kappa}$ for $\kappa\neq 0$ and $r^{1/2}$ and $r^{1/2}\log r$ for $\kappa=0$. We conclude that by statement~4 in Proposition~\ref{p_bound},
\begin{itemize}
\item[1.] $q_\kappa$ is in the l.p.c. at both $0$ and $\infty$ for real $\kappa$ such that $|\kappa|\geq 1$ and
\item[2.]  $q_\kappa$ is in the l.p.c. at $\infty$ and in the l.c.c. at $0$ for $-1<\kappa<1$.
\end{itemize}
Hence, the operator $h_\kappa$ is self-adjoint for $|\kappa|\geq 1$ and has multiple self-adjoint extensions for $-1<\kappa<1$.

For any $\kappa\in\C$, let
the map $u^\kappa\colon \C\to \mathcal D$ be defined by~(\ref{ukappa}).
By~(\ref{eqf}), we have
\begin{equation}\label{lukappa}
l_{q_\kappa,z}u^{\pm\kappa}(z) = 0,\quad z,\kappa\in\C.
\end{equation}

In what follows, we systematically use notation~(\ref{Calpha}) for the complex plane with a cut along a ray. We let $\log$ denote the branch of the logarithm in $\C_{3\pi/2}$ satisfying the condition $\log 1 = 0$ and set $z^\rho=e^{\rho\log z}$ for all $z\in \C_{3\pi/2}$ and $\rho\in\C$.

For any $\kappa\in\C$, we define the map $v^\kappa\colon \C_{3\pi/2}\to \mathcal D$ by the relation
\begin{equation}\label{vkappa}
v^\kappa(z|r) = \frac{i\pi}{2} e^{i\pi\kappa/2} r^{1/2}H^{(1)}_{\kappa}(r z^{1/2}),\quad r\in\R_+,\, z\in\C_{3\pi/2},
\end{equation}
where $H^{(1)}_\kappa$ is the first Hankel function of order $\kappa$. Because $H^{(1)}_\kappa$ is a solution of the Bessel equation, we have
\begin{equation}\label{lvkappa}
l_{q_\kappa,z}v^{\kappa}(z) = 0
\end{equation}
for every $z\in\C_{3\pi/2}$ and $\kappa\in\C$.
It follows from the relation $H^{(1)}_{-\kappa} = e^{i\pi\kappa}H^{(1)}_{\kappa}$ (formula~(9) in~Sec.~7.2.1 in~\cite{Bateman}) that
\begin{equation}\label{-kappa}
v^{-\kappa}(z) = v^\kappa(z),\quad \kappa\in\C,\, z\in\C_{3\pi/2}.
\end{equation}
The well-known asymptotic form of $H^{(1)}_\kappa(\zeta)$ for $\zeta\to\infty$ (see formula~(1) in~Sec.~7.13.1 in~\cite{Bateman}) implies that
\begin{equation}\label{v_asym}
v^\kappa(z|r) \sim 2^{-1}\sqrt{\pi}(i+1)z^{-1/4} e^{iz^{1/2}r},\quad r\to\infty,
\end{equation}
for every $\kappa\in \C$ and $z\in\C_{3\pi/2}$ and hence
$v^\kappa(z)$ is right square-integrable for all $\kappa\in \C$ and $z\in\C_+$. Using the expression for the Wronskian of Bessel functions (formula~(29) in~Sec.~7.11 in~\cite{Bateman}),
\begin{equation}\label{wrbess}
W_z(J_\kappa,H^{(1)}_\kappa) = \frac{2i}{\pi z},
\end{equation}
and taking~(\ref{-kappa}) into account, from~(\ref{ukappa}), (\ref{bessel}), and~(\ref{vkappa}), we derive that
\begin{equation}\label{wrsol1}
W(v^\kappa(z),u^{\kappa}(z)) = z^{-\kappa/2}e^{i\pi\kappa/2},\quad W(v^\kappa(z),u^{-\kappa}(z)) = z^{\kappa/2}e^{-i\pi\kappa/2}
\end{equation}
for any $\kappa\in \C$ and $z\in\C_{3\pi/2}$.

\begin{lemma}\label{lemma_u}
Let $\kappa>-1$. Then $u^\kappa(z)$ is a nontrivial element of $\mathcal D_{q_\kappa}^l$ for every $z\in\C$, and we have
$W_0(u^\kappa(z),u^\kappa(z')) = 0$
for all $z,z'\in\C$.
\end{lemma}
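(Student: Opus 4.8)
The plan is to prove the two assertions in turn, working throughout from the power-series representation~(\ref{Xkappa}). I would first write $u^\kappa(z|r)=\sum_{n=0}^\infty a_n z^n r^{1/2+\kappa+2n}$ with $a_n=(-1)^n/(2^\kappa\Gamma(\kappa+n+1)\,n!\,2^{2n})$. Since $\mathcal X_\kappa$ is entire, this series converges locally uniformly on $\R_+$ and may be differentiated term by term, so $u^\kappa(z)\in\mathcal D$. The leading term as $r\downarrow 0$ is $a_0 r^{1/2+\kappa}$ with $a_0=1/(2^\kappa\Gamma(\kappa+1))\neq 0$ (the hypothesis $\kappa>-1$ ensures $\Gamma(\kappa+1)$ is finite and nonzero), which already shows $u^\kappa(z)$ is nontrivial.

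To establish membership in $\mathcal D^l_{q_\kappa}$, I would note that $|u^\kappa(z|r)|^2\sim a_0^2\,r^{1+2\kappa}$ as $r\downarrow 0$ and that $\int_0^c r^{1+2\kappa}\,dr<\infty$ exactly when $1+2\kappa>-1$, i.e. when $\kappa>-1$; hence $u^\kappa(z)$ is left square-integrable on $\R_+$. Equality~(\ref{lukappa}) then gives $l_{q_\kappa}u^\kappa(z)=z\,[u^\kappa(z)]$, which is likewise left square-integrable, so $u^\kappa(z)\in\mathcal D^l_{q_\kappa}$ for every $z\in\C$.

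With this in hand, the limit $W_0(u^\kappa(z),u^\kappa(z'))=\lim_{r\downarrow 0}W_r(u^\kappa(z),u^\kappa(z'))$ exists by the general discussion following~(\ref{anti}), and it only remains to show that it vanishes. The key observation is that the leading coefficient $a_0$ does not depend on $z$: both $u^\kappa(z|r)$ and $u^\kappa(z'|r)$ behave like $a_0 r^{1/2+\kappa}$ near $0$, with derivatives like $a_0(1/2+\kappa)r^{-1/2+\kappa}$. Substituting these asymptotics into $W_r=u^\kappa(z)\,\partial_r u^\kappa(z')-\partial_r u^\kappa(z)\,u^\kappa(z')$, the two $O(r^{2\kappa})$ contributions cancel identically because the functions share the same leading exponent and coefficient. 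The first surviving terms arise from pairing the $a_0 r^{1/2+\kappa}$ and $a_1 z r^{5/2+\kappa}$ parts, and a short computation shows they combine to $2a_0 a_1(z'-z)\,r^{2+2\kappa}$ plus higher-order terms. Since $\kappa>-1$ forces $2+2\kappa>0$, this tends to $0$ as $r\downarrow 0$, giving the claim.

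The only delicate point is the bookkeeping in this last step: one must confirm that the exact leading $r^{2\kappa}$ term cancels — which is automatic, since the two functions have identical leading behaviour — and that every lower-order contribution carries a strictly positive power of $r$. Both facts follow directly from the term-by-term structure of~(\ref{Xkappa}), whose uniform convergence makes the expansion and its differentiation routine, so I anticipate no substantive obstacle.
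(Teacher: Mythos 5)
Your proof is correct and takes essentially the same route as the paper: left square-integrability follows from the $r^{1/2+\kappa}$ behavior near $0$ (with $l_{q_\kappa}u^\kappa(z)=z[u^\kappa(z)]$ handling the second membership condition), and the Wronskian vanishes because the $O(r^{2\kappa})$ contributions cancel — the leading coefficient $2^{-\kappa}/\Gamma(\kappa+1)$ being independent of $z$ — leaving only terms of order $r^{2+2\kappa}$ and higher. The paper packages this same cancellation more compactly via identity~(\ref{wrprod}), writing $u^\kappa(z|r)=r^{1/2+\kappa}\mathcal X_\kappa(r^2z)$ and obtaining the exact formula $W_r(u^\kappa(z),u^\kappa(z'))=2r^{2+2\kappa}\bigl(z'\mathcal X_\kappa(r^2z)\mathcal X'_\kappa(r^2z')-z\mathcal X'_\kappa(r^2z)\mathcal X_\kappa(r^2z')\bigr)$, which is what your term-by-term series bookkeeping establishes.
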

\begin{proof}
Because $\kappa>-1$, it follows from~(\ref{ukappa}) that $u^\kappa(z)$ is left square-integrable for all $z\in\C$. In view of~(\ref{lukappa}), this implies that $u^\kappa(z)\in\mathcal D_{q_\kappa}^l$ for all $z\in\C$. By~(\ref{ukappa}), $u^\kappa(z)$ is nontrivial for $z\neq 0$ because otherwise $\mathcal X_\kappa$ would be identically zero. Because $u^\kappa(0|r)=2^{-\kappa}r^{1/2+\kappa}/\Gamma(\kappa+1)$ by~(\ref{ukappa}) and~(\ref{Xkappa}), we conclude that $u^\kappa(0)$ is nontrivial for $\kappa>-1$.
By~(\ref{ukappa}) and~(\ref{wrprod}), we have
\[
W_r(u^\kappa(z),u^\kappa(z')) = 2r^{2+2\kappa}(z'\mathcal X_\kappa(r^2z)\mathcal X'_\kappa(r^2z')-z\mathcal X'_\kappa(r^2z)\mathcal X_\kappa(r^2z'))
\]
and hence $W_0(u^\kappa(z),u^\kappa(z')) = 0$ for all $z,z'\in\C$.
\end{proof}

By~(\ref{lukappa}), $u^\kappa$ is a real analytic $q_\kappa$-solution in $\C$ for every $\kappa\in\R$. Because $q_\kappa$ is in the l.p.c. at $\infty$, $\mathcal D_{q_\kappa}^r$ is a right boundary space for all $\kappa\in\R$. Definition~\ref{d_triple} and Lemma~\ref{lemma_u} therefore imply that $\mathfrak t_\kappa = (q_\kappa,\mathcal D_{q_\kappa}^r,u^\kappa)$ is an expansion triple for every $\kappa>-1$. Let $\sigma_\kappa$ denote the spectral measure for $\mathfrak t_\kappa$.

\begin{lemma}\label{l_spec}
Let $\kappa>-1$. Then $\sigma_\kappa = \mathcal V_\kappa$, where $\mathcal V_\kappa$ is the measure on $\R$ defined by~$(\ref{measVkappa})$.
\end{lemma}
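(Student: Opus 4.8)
The goal is to identify the spectral measure $\sigma_\kappa$ for the expansion triple $\mathfrak t_\kappa = (q_\kappa,\mathcal D_{q_\kappa}^r,u^\kappa)$ with the explicitly given measure $\mathcal V_\kappa$ of~(\ref{measVkappa}). By statement~1 of Proposition~\ref{t_eig}, the measure $\sigma_\kappa$ is obtained as the vague limit of $\mathrm{Im}\,\mathcal M^{\mathfrak t_\kappa}_{\tilde u}(E+i\eta)\,dE$ as $\eta\downarrow 0$, for any admissible auxiliary solution $\tilde u$. The natural strategy is therefore to choose a convenient $\tilde u$, compute the singular $m$-function $\mathcal M^{\mathfrak t_\kappa}_{\tilde u}$ explicitly from~(\ref{mathcalM}) using the Wronskians already recorded in~(\ref{wrsol1}), take the boundary values of its imaginary part, and verify that the resulting density is $\tfrac12\Theta(E)E^\kappa$.

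First I would select $\tilde u = u^{-\kappa}$, which is a real analytic $q_\kappa$-solution in $\C$ by~(\ref{lukappa}); since $u^{\kappa}$ and $u^{-\kappa}$ are linearly independent for $\kappa\neq 0$, the Wronskian $W(u^\kappa(z),u^{-\kappa}(z))$ is nonzero, so this choice is admissible (the case $\kappa=0$, where $u^{\pm\kappa}$ coincide, would be handled by passing to a limit or choosing a different $\tilde u$, e.g. a solution built from $u^0$ and its $\kappa$-derivative). Taking $v = v^\kappa$ as the right-square-integrable solution, formula~(\ref{mathcalM}) becomes
\[
\mathcal M^{\mathfrak t_\kappa}_{u^{-\kappa}}(z) = \frac{1}{\pi}\frac{W(v^\kappa(z),u^{-\kappa}(z))}{W(v^\kappa(z),u^\kappa(z))\,W(u^\kappa(z),u^{-\kappa}(z))}.
\]
The two Wronskians in the numerator and the first factor of the denominator are given in closed form by~(\ref{wrsol1}): they equal $z^{\kappa/2}e^{-i\pi\kappa/2}$ and $z^{-\kappa/2}e^{i\pi\kappa/2}$ respectively. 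The constant Wronskian $W(u^\kappa(z),u^{-\kappa}(z))$ can be evaluated at any convenient point using~(\ref{ukappa}) and~(\ref{Xkappa}); I expect it to be $z$-independent, with value fixed by the leading asymptotics of $\mathcal X_{\pm\kappa}$ near the origin and the reflection formula for the Gamma function, giving something proportional to $\sin\pi\kappa/\pi$ or a similar constant. Assembling these pieces, $\mathcal M^{\mathfrak t_\kappa}_{u^{-\kappa}}(z)$ reduces to a constant multiple of $z^{\kappa}$.

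The remaining task is to extract $\mathrm{Im}\,\mathcal M^{\mathfrak t_\kappa}_{u^{-\kappa}}(E+i\eta)$ as $\eta\downarrow 0$. With $\log$ the branch defined above on $\C_{3\pi/2}$, the function $z^\kappa = e^{\kappa\log z}$ has boundary values for $E>0$ that are real (the ray $R_0$ is approached from the upper half-plane with argument tending to $0$), so the imaginary part of $z^\kappa$ vanishes there in the limit, while for $E<0$ the boundary argument tends to $\pi$, producing a factor $e^{i\pi\kappa}$ whose imaginary part is $\sin\pi\kappa\cdot|E|^\kappa$. The crucial sign-bookkeeping step is to confirm that, after multiplying by the constant prefactor, the contribution on $(-\infty,0)$ cancels (or integrates to a measure-zero set, consistent with $\Theta(E)$ killing the negative axis) and the surviving density on $(0,\infty)$ is exactly $\tfrac12 E^\kappa$. \textbf{The main obstacle} I anticipate is precisely this constant and branch tracking: getting the overall normalization right requires careful use of the branch of $z^\kappa$ on $\C_{3\pi/2}$, the correct value of the constant Wronskian $W(u^\kappa,u^{-\kappa})$ (where the $\Gamma$-function reflection identity and a factor of $\sin\pi\kappa$ must combine cleanly to cancel the $\sin\pi\kappa$ hidden in the boundary-value computation), and verifying that the negative-axis boundary term genuinely contributes no mass, so that the Heaviside factor in~(\ref{measVkappa}) emerges rather than being imposed by hand.
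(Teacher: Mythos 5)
Your overall mechanism (compute a singular $m$-function from~(\ref{mathcalM}) via the Wronskians~(\ref{wrsol1}) and identify $\sigma_\kappa$ by Stieltjes inversion, Proposition~\ref{t_eig}) is the same as the paper's, and your computation is essentially correct for generic $\kappa$: with $\tilde u = u^{-\kappa}$ one finds $W(u^\kappa(z),u^{-\kappa}(z)) = -2\sin(\pi\kappa)/\pi$, hence $\mathcal M^{\mathfrak t_\kappa}_{u^{-\kappa}}(z) = -z^\kappa e^{-i\pi\kappa}/(2\sin\pi\kappa)$, whose boundary imaginary part is $\tfrac12 E^\kappa$ on $(0,\infty)$ and $0$ on $(-\infty,0)$. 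The genuine gap is your admissibility claim that $u^\kappa$ and $u^{-\kappa}$ are linearly independent ``for $\kappa\neq 0$'': this is false. Since $J_{-m} = (-1)^m J_m$ for integer $m$ --- equivalently, since $W(u^\kappa,u^{-\kappa}) = -2\sin(\pi\kappa)/\pi$ vanishes at \emph{every} integer --- the choice $\tilde u = u^{-\kappa}$ is inadmissible for all $\kappa\in\{0,1,2,\ldots\}$, not just $\kappa = 0$. These values lie squarely within the hypothesis $\kappa>-1$; indeed $\kappa\geq 1$ is exactly the regime in which Theorem~\ref{leig1} asserts that $h_\kappa$ itself is diagonalized. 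Your proposed fallback is not carried out precisely where the difficulty sits: ``passing to a limit'' in $\kappa$ is unjustified because no continuity of $\kappa\mapsto\sigma_\kappa$ is available at this stage (that is essentially a conclusion of the paper, Theorem~\ref{l_borel}, proved afterwards and only for $|\kappa|<1$), and ``choosing a different $\tilde u$'' that is real, analytic, and defined on all of $\C$ runs into the fact that at integer order every second solution involves $\log z$ (the Bessel function $Y_\kappa$), so no closed-form globally defined candidate is available.

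This integer/non-integer dichotomy is exactly what the paper's proof is engineered to avoid, via the \emph{local} statement~4 of Proposition~\ref{t_eig} with two different locally defined $\tilde u$'s: on $O = \{z: \mathrm{Re}\,z<0\}$ it takes $\tilde u_1 = v^\kappa|_O$ (real on $(-\infty,0)$, where $v^\kappa(E|r) = r^{1/2}K_\kappa(r\sqrt{|E|})$), for which $\mathcal M^{\mathfrak t_\kappa}_{\tilde u_1}\equiv 0$ trivially because the numerator of~(\ref{mathcalM}) is $W(v^\kappa(z),v^\kappa(z)) = 0$, so $\sigma_\kappa$ has no mass on $(-\infty,0)$; on $\C_\pi$ it takes $\tilde u_2(z|r) = r^{1/2}Y_\kappa(rz^{1/2})$, obtaining $\mathcal M^{\mathfrak t_\kappa}_{\tilde u_2}(z) = iz^\kappa/2$ and hence the density $E^\kappa/2$ on $(0,\infty)$; finally, a possible atom at $E=0$ is excluded by Corollary~\ref{coreig}, since $u^\kappa(0)$ is not square-integrable. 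All three steps are uniform in $\kappa>-1$, with no case distinction. The cleanest repair of your argument is to adopt this local device --- at least for integer $\kappa$, or uniformly for all $\kappa$ --- rather than to patch the global choice $\tilde u = u^{-\kappa}$ case by case.
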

\begin{proof}
By~(\ref{vkappa}), (\ref{lvkappa}), and~(\ref{v_asym}), $v^\kappa$ is a nonvanishing analytic $q_\kappa$-solution in $\C_{3\pi/2}$ such that $v^\kappa(z)\in \mathcal D^r_{q_\kappa}$ for every $z\in\C_+$. Let $\tilde u_1$ be the restriction $v^\kappa|_O$ of $v^\kappa$ to the domain $O = \{z\in\C: \mathrm{Re}\,z < 0\}$. In view of~(\ref{vkappa}), we have $\tilde u_1(E) = r^{1/2} K_\kappa(r\sqrt{|E|})$ for $E<0$, where $K_\kappa$ is the modified Bessel function of the second kind of order $\kappa$ (formula~(15) in~Sec.~7.2.2 in~\cite{Bateman}). Hence, $\tilde u_1(E)$ is real for $E<0$. By~(\ref{wrsol1}), we have $W(\tilde u_1(z),u^\kappa(z))\neq 0$ for all $z\in O$. Substituting $\mathfrak t = \mathfrak t_\kappa$, $v = v^\kappa|_{\C_+}$, and $\tilde u = \tilde u_1$ in~(\ref{mathcalM}) yields $\mathcal M^{\mathfrak t_\kappa}_{\tilde u_1}(z) = 0$ for all $z\in O\cap\C_+$. By statement~4 in Proposition~\ref{t_eig}, we conclude that $\sigma_\kappa$ vanishes and hence coincides with $\mathcal V_\kappa$ on $(-\infty,0)$. Let the map $\tilde u_2\colon \C_\pi\to \mathcal D$ be given by $\tilde u_2(z|r) = r^{1/2} Y_\kappa(r z^{1/2})$, where $Y_\kappa$ is the Bessel function of the second kind of order $\kappa$. We have $l_{q_\kappa,z}\tilde u_2(z)=0$ for any $z\in \C_\pi$ because $Y_\kappa$ satisfies the Bessel equation, and $\tilde u_2$ is therefore an analytic $q_\kappa$-solution in $\C_\pi$. Because $Y_\kappa$ is real for positive real arguments, $\tilde u_2(E)$ is real for $E>0$. Because $H^{(1)}_\kappa = J_\kappa + i Y_\kappa$, it follows from~(\ref{wrbess}) that
\[
W_z(J_\kappa,Y_\kappa) = W_z(H^{(1)}_\kappa,Y_\kappa) = -iW_z(J_\kappa,H^{(1)}_\kappa) = \frac{2}{\pi z}.
\]
By~(\ref{ukappa}), (\ref{bessel}), and~(\ref{vkappa}), we obtain $W(u^\kappa(z),\tilde u_2(z)) = 2z^{-\kappa/2}/\pi \neq 0$ for $z\in \C_\pi$ and $W(v^\kappa(z),\tilde u_2(z)) = ie^{i\pi\kappa/2}$ for $z\in \C_+$. In view of~(\ref{wrsol1}), substituting $\mathfrak t = \mathfrak t_\kappa$, $v = v^\kappa|_{\C_+}$, and $\tilde u = \tilde u_2$ in~(\ref{mathcalM}) yields
\[
\mathcal M_{\tilde u_2}^{\mathfrak t_\kappa}(z) = \frac{iz^\kappa}{2},\quad z\in\C_+.
\]
Statement~4 in Proposition~\ref{t_eig} therefore ensures that $\sigma_\kappa$ coincides with $\mathcal V_\kappa$ on $(0,\infty)$. It remains to note that $\sigma_\kappa(\{0\})=0$ because otherwise $u^\kappa(0)$ would be square-integrable by Corollary~\ref{coreig}.
\end{proof}

\begin{proof}[Proof of Theorem~$\ref{leig1}$]
It follows from statement~2 in Proposition~\ref{t_eig} and Lemma~\ref{l_spec} that the operator $U_\kappa$ exists and is equal to the spectral transformation for $\mathfrak t_\kappa$. Statement~1 in Proposition~\ref{p_bound}, statement~3 in Proposition~\ref{t_eig}, formula~(\ref{hkappa}), and Lemma~\ref{l_spec} therefore imply that $U^{-1}_\kappa \mathcal T^{\mathcal V_\kappa}_\iota U_\kappa$ is a self-adjoint extension of $h_\kappa$. For $\kappa\geq 1$, $h_\kappa$ is self-adjoint and hence coincides with its self-adjoint extension $U^{-1}_\kappa \mathcal T^{\mathcal V_\kappa}_\iota U_\kappa$.
\end{proof}

\begin{remark}
As mentioned in Sec.~\ref{intro}, the operator $U_\kappa$ essentially coincides with the Hankel transformation. In~\cite{GesztesyZinchenko,KST}, where this transformation was treated similarly, the second solution $\tilde u$ used to calculate the spectral measure was required to be globally defined. This required distinguishing between integer and noninteger values of $\kappa$. Using a locally defined $\tilde u$ in the proof of Lemma~\ref{l_spec} allows treating all values of $\kappa$ uniformly.
\end{remark}

Given $\kappa\in\mathscr O$ and $\vartheta\in\C$, let the map $u^\kappa_\vartheta\colon \C\to \mathcal D$ be defined by~(\ref{wkappa}) and~(\ref{w(z)}). Because (\ref{eqf}) is satisfied for $u^\kappa_\vartheta(z)$ in place of $u^{\pm\kappa}(z)$ (see Sec.~\ref{intro}), we have
\begin{equation}\label{lukappavartheta}
l_{q_\kappa,z}u^\kappa_\vartheta(z)=0,\quad \kappa\in \mathscr O,\,\,\,\vartheta,z\in\C.
\end{equation}
By~(\ref{wkappa}) and~(\ref{wrsol1}), we have
\begin{equation}\label{Wuv}
W(v^\kappa(z),u^\kappa_\vartheta(z)) = \frac{z^{-\kappa/2}e^{i\pi\kappa/2}}{\sin\pi\kappa}(\sin(\vartheta+\vartheta_\kappa)-e^{-i\pi\kappa}z^\kappa \sin(\vartheta-\vartheta_\kappa))
\end{equation}
for all $\kappa\in \mathscr O\setminus\{0\}$, $\vartheta\in\C$, and $z\in\C_{3\pi/2}$. By Lemma~\ref{l_analyt}, $\kappa\to W(v^\kappa(z),u^\kappa_\vartheta(z))$ is an analytic function in $\mathscr O$ for fixed $\vartheta$ and $z$ and we can therefore find $W(v^0(z),u^0_\vartheta(z))$ by taking the limit $\kappa\to 0$ in~(\ref{Wuv}). As a result, we obtain
\begin{equation}\label{Wuv0}
W(v^0(z),u^0_\vartheta(z)) = \cos\vartheta + (i-\pi^{-1}\log z)\sin\vartheta,\quad \vartheta\in\C,\, z\in\C_{3\pi/2}.
\end{equation}
For every $\kappa\in\mathscr O$ and $z\in\C$, we set
\begin{equation}\label{wkappa(z)}
w^\kappa(z) = u^\kappa_{\pi/2+\vartheta_\kappa}(z),
\end{equation}
where $\vartheta_\kappa$ is given by~(\ref{varthetakappa}). It follows from~(\ref{wkappa}) and~(\ref{w(z)}) that
\begin{align}
&w^\kappa(z) = \frac{u^\kappa(z)\cos\pi\kappa - u^{-\kappa}(z)}{\sin\pi\kappa},\quad \kappa\in\mathscr O\setminus\{0\},\label{wk}\\
&w^0(z|r) = \frac{2}{\pi}\left[\left(\log\frac{r}{2} + \gamma\right)u^0(z|r) -\sqrt{r}\,\mathcal Y(r^2z)\right],\quad r\in\R_+,\label{w0}
\end{align}
for every $z\in\C$ and
\begin{equation}\label{ukappavartheta}
u^\kappa_\vartheta(z) =  u^\kappa(z)\cos(\vartheta-\vartheta_\kappa) + w^\kappa(z)\sin(\vartheta-\vartheta_\kappa)
\end{equation}
for all $\kappa\in\mathscr O$ and $\vartheta,z\in\C$.
By~(\ref{lukappavartheta}) and~(\ref{wkappa(z)}), we have
\begin{equation}\label{lu0}
l_{q_\kappa,z}w^\kappa(z)=0,\quad \kappa\in\mathscr O,\,\,z\in\C.
\end{equation}

\begin{lemma}\label{leigaux}
Let $-1<\kappa<1$. Then $u^\kappa(z),w^\kappa(z)\in \mathcal D^l_{q_\kappa}$ for every $z\in\C$, and
\[
W_0(w^\kappa(z),w^\kappa(z'))=0,\quad W_0(u^\kappa(z),w^\kappa(z')) = \frac{2}{\pi}
\]
for every $z,z'\in\C$.
\end{lemma}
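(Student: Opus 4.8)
The plan is to treat the three assertions in turn: the membership statement uniformly in $\kappa$, and then the two Wronskian identities by splitting into the generic case $\kappa\neq 0$, handled by a direct computation, and the exceptional case $\kappa=0$, handled by a limiting argument.

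For the membership $u^\kappa(z),w^\kappa(z)\in\mathcal D^l_{q_\kappa}$ I would invoke that $q_\kappa$ is in the l.c.c.\ at $0$ for $-1<\kappa<1$ (established after~(\ref{hkappa})). By statement~4 in Proposition~\ref{p_bound}, every solution of $l_{q_\kappa,z}f=0$ is then left square-integrable. Both $u^\kappa(z)$ and $w^\kappa(z)$ are such solutions by~(\ref{lukappa}) and~(\ref{lu0}), so each is left square-integrable; since each is a solution, $l_{q_\kappa}$ applied to it equals $z$ times its class and is therefore left square-integrable as well, giving membership in $\mathcal D^l_{q_\kappa}$.

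The heart of the matter is the single computation $W_0(u^\kappa(z),u^{-\kappa}(z'))=-2\pi^{-1}\sin\pi\kappa$. First I would record from~(\ref{ukappa}) and~(\ref{Xkappa}) the leading behavior $u^{\pm\kappa}(z|r)=r^{1/2\pm\kappa}(2^{\pm\kappa}\Gamma(1\pm\kappa))^{-1}+O(r^{5/2\pm\kappa})$ as $r\downarrow 0$, differentiate termwise, and substitute into~(\ref{wronskian}); the two leading contributions combine to $-2\kappa\,\mathcal X_\kappa(0)\mathcal X_{-\kappa}(0)$ while all remaining terms carry a positive power of $r$ and vanish in the limit. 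The reflection formula $\Gamma(1+\kappa)\Gamma(1-\kappa)=\pi\kappa/\sin\pi\kappa$ then gives $\mathcal X_\kappa(0)\mathcal X_{-\kappa}(0)=\sin\pi\kappa/(\pi\kappa)$ and hence the stated value. For $\kappa\in(-1,1)\setminus\{0\}$ I would expand $w^\kappa$ via~(\ref{wk}), use bilinearity and antisymmetry of $W_0$, and insert $W_0(u^{\pm\kappa}(z),u^{\pm\kappa}(z'))=0$ from Lemma~\ref{lemma_u} (applicable to both $\kappa$ and $-\kappa$ since both exceed $-1$); the cross terms then combine so that $W_0(w^\kappa(z),w^\kappa(z'))=0$ and $W_0(u^\kappa(z),w^\kappa(z'))=2/\pi$, the latter being exactly $-(\sin\pi\kappa)^{-1}$ times $W_0(u^\kappa(z),u^{-\kappa}(z'))$.

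It remains to pass to $\kappa=0$, and this is the step requiring the most care, because $W_0$ is itself a limit at the singular endpoint and one must justify interchanging it with $\kappa\to 0$. I would fix $c>0$ and use the integration-by-parts identity preceding~(\ref{anti}) with $f,g$ solutions at energies $z,z'$ to write $W_0(f,g)=W_c(f,g)-(z-z')\int_0^c f(r)g(r)\,dr$. For fixed $c$ the quantity $W_c(f,g)$ is analytic in $\kappa$ by Lemma~\ref{l_analyt}, and the integral is continuous in $\kappa$ by dominated convergence, using a bound of the form $|u^\kappa(z|r)w^\kappa(z'|r)|\le C r^{1-2\delta}(1+|\log r|)^2$ valid for $|\kappa|\le\delta<1/2$. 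Hence both $W_0(u^\kappa(z),w^\kappa(z'))$ and $W_0(w^\kappa(z),w^\kappa(z'))$ are continuous in $\kappa$ on $(-1,1)$; since they equal the constants $2/\pi$ and $0$ for $\kappa\neq 0$, the same values persist at $\kappa=0$. Alternatively, one may substitute the explicit expansion~(\ref{w0}) directly: the leading terms of both functions are proportional to $r^{1/2}(\log(r/2)+\gamma)$, and in each Wronskian the logarithmic contributions cancel, leaving the limits $0$ and $2/\pi$.
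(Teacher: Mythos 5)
Your proof is correct, and for most of the lemma it coincides with the paper's own argument: the membership claim is handled exactly as in the paper (l.c.c.\ at $0$ plus statement~4 of Proposition~\ref{p_bound} applied to~(\ref{lukappa}) and~(\ref{lu0})), and the case $\kappa\neq 0$ rests on the same key value $W_0(u^\kappa(z),u^{-\kappa}(z'))=-2\pi^{-1}\sin\pi\kappa$ obtained from the reflection formula, followed by the same bilinear expansion through~(\ref{wk}) and Lemma~\ref{lemma_u}; the only cosmetic difference is that the paper organizes the $r\downarrow 0$ limit via identity~(\ref{wrprod}) applied to the smooth functions $a^\kappa_z(r)=\mathcal X_\kappa(r^2z)$, while you differentiate the series termwise. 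Where you genuinely diverge is at $\kappa=0$: the paper takes no limit in $\kappa$ at all, but computes both Wronskians directly from~(\ref{w0}), writing $\pi w^0(z|r)/2=r^{1/2}\log r\,a^0_z(r)+r^{1/2}b_z(r)$ with $b_z(r)=(\gamma-\log 2)\mathcal X_0(r^2z)-\mathcal Y(r^2z)$ smooth up to $r=0$, so that~(\ref{wrprod}) makes the limit trivial (every $\log r$ term carries a factor of $r$, and $b_z(0)=\gamma-\log 2$ is independent of $z$, which is what forces $W_0(w^0(z),w^0(z'))=0$). Your primary route instead — the identity $W_0(f,g)=W_c(f,g)-(z-z')\int_0^c f(r)g(r)\,dr$, analyticity of $W_c$ in $\kappa$ from Lemma~\ref{l_analyt}, and dominated convergence — is sound and harmonizes nicely with the paper's theme that everything is analytic in $\kappa$; but it shifts the real work into the uniform bound near $r=0$, which you assert rather than prove (and which you also need for $|w^\kappa(z|r)w^\kappa(z'|r)|$, not only for $|u^\kappa w^\kappa|$). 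Verifying that bound honestly — splitting $w^\kappa$ into parts proportional to $r^\kappa\pm r^{-\kappa}$ and estimating $\sinh(\kappa\log r)/\sin\pi\kappa$ uniformly for $|\kappa|\le\delta$ — costs about as much computation as the paper's direct evaluation. So the paper's route buys self-containedness with no interchange-of-limits issue, while yours buys a conceptually uniform mechanism (constancy in $\kappa\neq 0$ plus continuity at $\kappa=0$) at the price of a technical estimate; your closing alternative, substituting~(\ref{w0}) directly, is in fact the paper's proof.
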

\begin{proof}
Because $q_\kappa$ is in the l.c.c. at $0$ for $-1<\kappa<1$, statement~4 in Proposition~\ref{p_bound} and equalities~(\ref{lukappa}) and~(\ref{lu0}) imply that $u^\kappa(z),w^\kappa(z)\in \mathcal D^l_{q_\kappa}$ for every $z\in\C$.
Given $z\in\C$ and $-1<\kappa<1$, we define a smooth function $a^\kappa_z$ on $\R$ by setting $a^\kappa_z(r) = \mathcal X_\kappa(r^2z)$, where $\mathcal X_\kappa$ is given by~(\ref{Xkappa}). For $r\in\R_+$, we have $u^\kappa(z|r) = r^{1/2+\kappa}a^\kappa_z(r)$. In view of~(\ref{wrprod}), it follows that
\begin{equation}
W_r(u^\kappa(z),u^{-\kappa}(z')) = rW_r(a^\kappa_z,a^{-\kappa}_{z'}) - 2\kappa a^\kappa_z(r)a^{-\kappa}_{z'}(r)\nonumber
\end{equation}
for every $r\in\R_+$ and $z,z'\in\C$. Because $a^\kappa_z(0) = 2^{-\kappa}/\Gamma(1+\kappa)$ for any $z\in\C$, we obtain
$W_0(u^\kappa(z),u^{-\kappa}(z')) = -2\sin\pi\kappa/\pi$.
The statement of the lemma for $0<|\kappa|<1$ now follows from~(\ref{wk}) and Lemma~\ref{lemma_u}.
Given $z\in\C$, we define the smooth function $b_z$ on $\R$ by setting
\[
b_z(r) = (\gamma-\log 2)\mathcal X_{0}(r^2 z)-\mathcal Y(r^2z).
\]
By~(\ref{w0}), we have
\[
\pi w^0(z|r)/2 = r^{1/2}\log r\, a^0_z(r)+r^{1/2}b_z(r)
\]
for every $r\in\R_+$. In view of~(\ref{wrprod}), it follows that
\begin{align}
&\frac{\pi}{2}W_r(u^0(z),w^0(z')) =  r W_r(a^0_z,b_{z'})+r\log r W_r(a^0_z,a^0_{z'})+a^0_z(r)a^0_{z'}(r),\nonumber\\
&\frac{\pi^2}{4}W_r(w^0(z),w^0(z')) =  r\log^2rW_r(a^0_z,a^0_{z'})+r\log r(W_r(a^0_z,b_{z'})+W_r(b_z,a^0_{z'}))+\nonumber\\
&+ rW_r(b_z,b_{z'}) +b_z(r)a^0_{z'}(r)- a^0_z(r)b_{z'}(r)\nonumber
\end{align}
for every $r\in\R_+$ and $z,z'\in\C$. Because $a^0_z(0) = 1$  and $b_z(0) = \gamma-\log 2$ for any $z\in\C$, these equalities imply the required statement for $\kappa=0$.
\end{proof}

In view of~(\ref{lukappa}) and~(\ref{lu0}), Lemma~\ref{leigaux} implies that
\begin{equation}\label{wruw}
W(u^\kappa(z),w^\kappa(z)) = \frac{2}{\pi}
\end{equation}
for every $z\in\C$ and $-1<\kappa<1$ (and hence for all $z\in\C$ and $\kappa\in\mathscr O$), and it follows from~(\ref{ukappavartheta}) that
\begin{equation}\label{wrutildeu}
W(u^\kappa_\vartheta(z),u^\kappa_{\vartheta-\pi/2}(z)) = -\frac{2}{\pi},\quad \vartheta,z\in\C,\,\,\kappa\in\mathscr O.
\end{equation}

Let $-1<\kappa<1$ and $\vartheta\in \R$. By~(\ref{lukappavartheta}) and~(\ref{wrutildeu}), $u^\kappa_\vartheta$ is a real nonvanishing analytic $q_\kappa$-solution in $\C$. In view of~(\ref{ukappavartheta}), Lemma~\ref{lemma_u}, and Lemma~\ref{leigaux}, we have $u^\kappa_\vartheta(z)\in \mathcal D_{q_\kappa}^l$ for all $z\in \C$ and $W_0(u^\kappa_\vartheta(z),u^\kappa_\vartheta(z'))=0$ for all $z,z'\in\C$. Because $\mathcal D_{q_\kappa}^r$ is a right boundary space, it follows from Definition~\ref{d_triple} that $\mathfrak t_{\kappa,\vartheta} = (q_\kappa,\mathcal D_{q_\kappa}^r,u^\kappa_\vartheta)$ is an expansion triple. Let $\sigma_{\kappa,\vartheta}$ denote the spectral measure for $\mathfrak t_{\kappa,\vartheta}$.

\begin{lemma}\label{l_spec1}
Let $-1<\kappa<1$ and $\vartheta\in\R$. Then $\sigma_{\kappa,\vartheta} = \mathcal V_{\kappa,\vartheta}$, where $\mathcal V_{\kappa,\vartheta}$ is the measure on $\R$ defined by formulas~$(\ref{measVkappatheta})$--$(\ref{tildeV0theta})$.
\end{lemma}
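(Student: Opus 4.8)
The goal is to compute the spectral measure $\sigma_{\kappa,\vartheta}$ for the expansion triple $\mathfrak t_{\kappa,\vartheta}=(q_\kappa,\mathcal D^r_{q_\kappa},u^\kappa_\vartheta)$ using the local approach of statement~4 in Proposition~\ref{t_eig}, exactly as was done for the Hankel case in Lemma~\ref{l_spec}. The strategy is to pick, on each part of the spectral axis, a convenient second solution $\tilde u$ that is real there and for which the Wronskians with $u^\kappa_\vartheta$ and with $v^\kappa$ are easy to evaluate, then read off $\operatorname{Im}\mathcal M^{\mathfrak t_{\kappa,\vartheta}}_{\tilde u}$ via~(\ref{mathcalM}) and take the boundary limit $\eta\downarrow 0$. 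The natural auxiliary solution is again $v^\kappa$, whose right square-integrability and analyticity are already recorded, together with the companion $u^\kappa_{\vartheta-\pi/2}$, for which~(\ref{wrutildeu}) gives $W(u^\kappa_\vartheta,u^\kappa_{\vartheta-\pi/2})=-2/\pi$.

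**Main computation on $(0,\infty)$.**
First I would treat the absolutely continuous part. Using~(\ref{wkappa}) and~(\ref{wrsol1}) to compute $W(v^\kappa(z),u^\kappa_\vartheta(z))$ (an analytic-in-$\kappa$ formula analogous to~(\ref{Wuv}), valid at $\kappa=0$ by Lemma~\ref{l_analyt} after passing to the limit as in~(\ref{Wuv0})), I would substitute $\mathfrak t=\mathfrak t_{\kappa,\vartheta}$, $v=v^\kappa|_{\C_+}$, and $\tilde u=u^\kappa_{\vartheta-\pi/2}$ into~(\ref{mathcalM}). Because $u^\kappa_{\vartheta-\pi/2}(E)$ is real for real $\vartheta$ and all $E$, and $W(u^\kappa_\vartheta,u^\kappa_{\vartheta-\pi/2})=-2/\pi$ is a nonzero constant, the $m$-function reduces to a ratio whose boundary value on $E>0$ is obtained by setting $z=E+i0$, i.e. $z^{\pm\kappa/2}=E^{\pm\kappa/2}$ with the principal branch. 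Taking the imaginary part and simplifying the resulting modulus-squared in the denominator (which produces exactly the expression $E^{-\kappa}\sin^2(\vartheta+\vartheta_\kappa)-2\cos\pi\kappa\sin(\vartheta+\vartheta_\kappa)\sin(\vartheta-\vartheta_\kappa)+E^{\kappa}\sin^2(\vartheta-\vartheta_\kappa)$) should reproduce $d\tilde{\mathcal V}_{\kappa,\vartheta}$ as given in~(\ref{tildeVkappatheta}); the $\kappa=0$ case follows by the same limiting argument using~(\ref{Wuv0}) and~(\ref{tildeV0theta}).

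**The negative axis and the point mass.**
Next I would analyze $(-\infty,0)$, where the new feature — the possible eigenvalue at $E_{\kappa,\vartheta}$ — appears. Restricting $v^\kappa$ to $O=\{\operatorname{Re}z<0\}$ gives a solution real on $(-\infty,0)$ (it is proportional to $r^{1/2}K_\kappa(r\sqrt{|E|})$, as noted in Lemma~\ref{l_spec}) and right square-integrable. The relevant $m$-function is then $\mathcal M^{\mathfrak t_{\kappa,\vartheta}}_{v^\kappa|_O}$, whose single Wronskian factor $W(v^\kappa(z),u^\kappa_\vartheta(z))$ in the denominator, by~(\ref{Wuv}), vanishes precisely where $\sin(\vartheta+\vartheta_\kappa)=e^{-i\pi\kappa}z^\kappa\sin(\vartheta-\vartheta_\kappa)$; solving on the negative real axis (where $z^\kappa=|E|^\kappa e^{\pm i\pi\kappa}$) yields the point $E_{\kappa,\vartheta}$ of~(\ref{Ekappavartheta}), present exactly in the parameter range $\vartheta\in(|\vartheta_\kappa|,\pi-|\vartheta_\kappa|)+\pi\Z$. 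Off this point the $m$-function is real there, so $\sigma_{\kappa,\vartheta}$ carries no absolutely continuous mass on $(-\infty,0)$; at the zero of the Wronskian $\mathcal M^{\mathfrak t_{\kappa,\vartheta}}_{v^\kappa|_O}$ has a simple pole, and the spectral measure acquires an atom whose weight is the residue. The main obstacle is this residue computation: I expect it to require differentiating the Wronskian~(\ref{Wuv}) in $z$ at $E_{\kappa,\vartheta}$, dividing by the (constant) remaining Wronskian factors, and carefully matching the resulting coefficient to $\tfrac{\pi\sin\pi\kappa\,|E_{\kappa,\vartheta}|}{2\kappa\sin(\vartheta+\vartheta_\kappa)\sin(\vartheta-\vartheta_\kappa)}$ in~(\ref{measVkappatheta}), with a parallel limit giving~(\ref{measV0theta}) at $\kappa=0$. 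Finally, as in Lemma~\ref{l_spec}, I would check $\sigma_{\kappa,\vartheta}(\{0\})=0$ using Corollary~\ref{coreig}, since $u^\kappa_\vartheta(0)$ is not square-integrable at the origin.
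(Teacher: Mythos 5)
Your treatment of $(0,\infty)$ is essentially the paper's own argument: taking $v=v^\kappa|_{\C_+}$ and $\tilde u=u^\kappa_{\vartheta-\pi/2}$ in~(\ref{mathcalM}), using~(\ref{Wuv}) (and its $\kappa\to 0$ limit~(\ref{Wuv0})) together with $W(u^\kappa_\vartheta(z),u^\kappa_{\vartheta-\pi/2}(z))=-2/\pi$ from~(\ref{wrutildeu}), and matching $\mathrm{Im}$ of the boundary values with~(\ref{tildeVkappatheta}) and~(\ref{tildeV0theta}); the check $\sigma_{\kappa,\vartheta}(\{0\})=0$ via Corollary~\ref{coreig} is also as in the paper. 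The genuine gap is your argument on $(-\infty,0)$. There you switch to $\tilde u=v^\kappa|_O$ with $O=\{z:\mathrm{Re}\,z<0\}$, imitating Lemma~\ref{l_spec}. But then the numerator of~(\ref{mathcalM}) is $W(v(z),\tilde u(z))=W(v^\kappa(z),v^\kappa(z))\equiv 0$, so $\mathcal M^{\mathfrak t_{\kappa,\vartheta}}_{v^\kappa|_O}$ is \emph{identically zero} wherever it is defined. An identically vanishing function has no pole at $E_{\kappa,\vartheta}$ and zero imaginary part, so this $m$-function cannot detect the atom; your claim that it ``has a simple pole whose residue gives the weight'' is false. (This same choice was legitimate in Lemma~\ref{l_spec} precisely because the conclusion there was that $\sigma_\kappa$ vanishes on $(-\infty,0)$.) Moreover, in the range $\vartheta\in(|\vartheta_\kappa|,\pi-|\vartheta_\kappa|)+\pi\Z$ you cannot even invoke statement~4 of Proposition~\ref{t_eig} with this $\tilde u$ on all of $O$: its hypothesis $W(u(z),\tilde u(z))\neq 0$ for every $z\in O$ fails exactly at $z=E_{\kappa,\vartheta}$ — necessarily so, since otherwise the statement would force the false conclusion that $\sigma_{\kappa,\vartheta}$ vanishes on all of $(-\infty,0)$. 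At best, applying statement~4 on $O\setminus\{E_{\kappa,\vartheta}\}$ shows $\sigma_{\kappa,\vartheta}$ vanishes on $(-\infty,0)\setminus\{E_{\kappa,\vartheta}\}$, leaving the atom's weight — the substantive content of~(\ref{measVkappatheta}) and~(\ref{measV0theta}) on the negative axis — uncomputed.

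The repair is to not switch solutions at all, which is what the paper does: keep the globally defined real solution $\tilde u=u^\kappa_{\vartheta-\pi/2}$ on the whole axis, so that statement~1 of Proposition~\ref{t_eig} applies directly (no local argument is needed, since $W(u^\kappa_\vartheta(z),u^\kappa_{\vartheta-\pi/2}(z))=-2/\pi$ never vanishes). The resulting function $\mathscr M_{\kappa,\vartheta}$ of~(\ref{Mkappavartheta}) is meromorphic in $\C_{3\pi/2}$, real on $(-\infty,0)$ away from $E_{\kappa,\vartheta}$, and has a \emph{genuine} simple pole there: the denominator $W(v^\kappa(z),u^\kappa_\vartheta(z))$ has a simple zero at $E_{\kappa,\vartheta}$, while the numerator is nonzero because $u^\kappa_\vartheta(E_{\kappa,\vartheta})=c\,v^\kappa(E_{\kappa,\vartheta})$ with $c\neq 0$ gives $W(v^\kappa,u^\kappa_{\vartheta-\pi/2})=-2/(\pi c)\neq 0$. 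Stieltjes inversion~(\ref{sigmakappavartheta}) of this single function then yields simultaneously the absence of spectrum on $(-\infty,0)$ off the pole and the atom $\pi A\,\delta_{E_{\kappa,\vartheta}}$ with $A$ the residue, which is exactly the computation your proposal is missing.
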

\begin{proof}
By~(\ref{vkappa}), (\ref{lvkappa}), and~(\ref{v_asym}), $v^\kappa$ is a nonvanishing analytic $q_\kappa$-solution in $\C_{3\pi/2}$ such that $v^\kappa(z)\in \mathcal D^r_{q_\kappa}$ for every $z\in\C_+$.
Let the meromorphic function $\mathscr M_{\kappa,\vartheta}$ in $\C_{3\pi/2}$ be defined by the relation
\begin{equation}\label{Mkappavartheta}
\mathscr M_{\kappa,\vartheta}(z) = -\frac{1}{2}\frac{W(v^\kappa(z),u^\kappa_{\vartheta-\pi/2}(z))}{W(v^\kappa(z),u^\kappa_{\vartheta}(z))},\quad z\in \C_{3\pi/2}.
\end{equation}
Substituting $\mathfrak t=\mathfrak t_{\kappa,\vartheta}$ and $v = v^\kappa|_{\C_+}$ in~(\ref{mathcalM}) and taking~(\ref{wrutildeu}) into account, we conclude that $\mathscr M_{\kappa,\vartheta}$ coincides on $\C_+$ with the singular Titchmarsh-Weyl $m$-function $\mathcal M^{\mathfrak t_{\kappa,\vartheta}}_{\tilde u}$ for $\tilde u=u^\kappa_{\vartheta-\pi/2}$. By statement~1 in Proposition~\ref{t_eig}, we have
\begin{equation}\label{sigmakappavartheta}
\int \varphi(E)\,d\sigma_{\kappa,\vartheta}(E) = \lim_{\eta\downarrow 0} \int \varphi(E)\,\mathrm{Im}\mathscr M_{\kappa,\vartheta}(E+i\eta)\,dE
\end{equation}
for any continuous function $\varphi$ on $\R$ with compact support.
We note that $\sigma_{\kappa,\vartheta}(\{0\})=0$ because otherwise $u^\kappa_\vartheta(0)$ would be square-integrable by Corollary~\ref{coreig}. It therefore suffices to show that $\sigma_{\kappa,\vartheta}$ and $\mathcal V_{\kappa,\vartheta}$ coincide on the intervals $(-\infty,0)$ and $(0,\infty)$. This can be easily done using representation~(\ref{sigmakappavartheta}) for $\sigma_{\kappa,\vartheta}$. Because the explicit expressions for $\mathscr M_{\kappa,\vartheta}$ differ for $0<|\kappa|<1$ and $\kappa=0$, we consider these cases separately.

\par\medskip\noindent 1. The case $0<|\kappa|<1$: In view of~(\ref{Wuv}) and~(\ref{Mkappavartheta}), we have
\begin{equation}\label{eee}
\mathscr M_{\kappa,\vartheta}(z) = \frac{1}{2}\frac{\cos(\vartheta+\vartheta_\kappa) - e^{-i\pi\kappa}z^\kappa\cos(\vartheta-\vartheta_\kappa)}{\sin(\vartheta+\vartheta_\kappa) - e^{-i\pi\kappa}z^\kappa\sin(\vartheta-\vartheta_\kappa)}.
\end{equation}
It is easy to see that $\mathscr M_{\kappa,\vartheta}$ has no singularities on $(0,\infty)$ and
\[
\mathrm{Im}\,\mathscr M_{\kappa,\vartheta}(E) = \frac{1}{2}\frac{\sin^2\pi\kappa}{E^{-\kappa}\sin^2\vartheta_+ - 2\cos\pi\kappa\sin\vartheta_+\sin\vartheta_- + E^{\kappa}\sin^2\vartheta_-},\quad E>0,
\]
where $\vartheta_\pm = \vartheta\pm \vartheta_\kappa$. By~(\ref{measVkappatheta}), (\ref{tildeVkappatheta}), and~(\ref{sigmakappavartheta}), we conclude that $\sigma_{\kappa,\vartheta}$ coincides with $\mathcal V_{\kappa,\vartheta}$ on $(0,\infty)$. For $\vartheta \in [-|\vartheta_\kappa|,|\vartheta_\kappa|]+\pi\Z$, $\mathscr M_{\kappa,\vartheta}$ is real on $(-\infty,0)$ and has no singularities on this set. Formula~(\ref{sigmakappavartheta}) therefore implies that $\sigma_{\kappa,\vartheta}$ is zero on $(-\infty,0)$ for such $\vartheta$. If $\vartheta\in (|\vartheta_\kappa|,\pi-|\vartheta_\kappa|)+\pi\Z$, then $\mathscr M_{\kappa,\theta}$ has a simple pole at the point $E_{\kappa,\vartheta}$ given by~(\ref{Ekappavartheta}) and, hence, is representable in the form
\[
\mathscr M_{\kappa,\vartheta}(z) = g(z) + \frac{A}{E_{\kappa,\vartheta}-z},
\]
where $g$ is a function analytic in $\C_{3\pi/2}$ and real on $(-\infty,0)$ and
\[
A = \lim_{z\to E_{\kappa,\vartheta}} (E_{\kappa,\vartheta}-z)\mathscr M_{\kappa,\vartheta}(z) = \frac{\sin\pi\kappa|E_{\kappa,\vartheta}|}{2\kappa\sin(\vartheta+\vartheta_\kappa)\sin(\vartheta-\vartheta_\kappa)}.
\]
It therefore follows from~(\ref{sigmakappavartheta}) that $\sigma_{\kappa,\vartheta}$ is equal to $\pi A\delta_{E_{\kappa,\vartheta}}$ on $(-\infty,0)$. Hence,  $\sigma_{\kappa,\vartheta}$ coincides with $\mathcal V_{\kappa,\vartheta}$ on $(-\infty,0)$ for all $\vartheta$.

\par\medskip\noindent 2. The case $\kappa=0$: In view of~(\ref{Wuv0}) and~(\ref{Mkappavartheta}), we have
\[
\mathscr M_{0,\vartheta}(z) = \frac{1}{2}\,\frac{(i-\pi^{-1}\log z)\cos\vartheta - \sin\vartheta}{\cos\vartheta + (i-\pi^{-1}\log z)\sin\vartheta}.
\]
It is easy to see that $\mathscr M_{0,\vartheta}$ has no singularities on $(0,\infty)$ and
\[
\mathrm{Im}\,\mathscr M_{0,\vartheta}(E) = \frac{1}{2}\,\frac{1}{ (\cos\vartheta -\log E\sin\vartheta/\pi)^2 + \sin^2\vartheta},\quad E>0.
\]
By~(\ref{measV0theta}), (\ref{tildeV0theta}), and~(\ref{sigmakappavartheta}), we conclude that $\sigma_{0,\vartheta}$ coincides with $\mathcal V_{0,\vartheta}$ on $(0,\infty)$. For $\vartheta\in \pi\Z$, $\mathscr M_{0,\vartheta}$ is real on $(-\infty,0)$ and has no singularities on this set. Formula~(\ref{sigmakappavartheta}) therefore implies that $\sigma_{0,\vartheta}$ is zero on $(-\infty,0)$ for such $\vartheta$. If $\vartheta\notin \pi\Z$, then $\mathscr M_{0,\theta}$ has a simple pole at the point $E_{0,\vartheta}$ given by~(\ref{E0vartheta}) and is hence representable in the form
\[
\mathscr M_{0,\vartheta}(z) = g(z) + \frac{A}{E_{0,\vartheta}-z},
\]
where $g$ is a function analytic in $\C_{3\pi/2}$ and real on $(-\infty,0)$ and
\[
A = \lim_{z\to E_{0,\vartheta}} (E_{0,\vartheta}-z)\mathscr M_{0,\vartheta}(z) = \frac{\pi|E_{0,\vartheta}|}{2\sin^2\vartheta}.
\]
It therefore follows from~(\ref{sigmakappavartheta}) that $\sigma_{0,\vartheta}$ is equal to $\pi A\delta_{E_{0,\vartheta}}$ on $(-\infty,0)$. Therefore,  $\sigma_{0,\vartheta}$ coincides with $\mathcal V_{0,\vartheta}$ on $(-\infty,0)$ for all $\vartheta$.
\end{proof}

\begin{proof}[Proof of Theorem~$\ref{leig2}$]
It follows from statement~2 in Proposition~\ref{t_eig} and Lemma~\ref{l_spec1} that the operator $U_{\kappa,\vartheta}$ exists and is equal to the spectral transformation for $\mathfrak t_{\kappa,\vartheta}$. Statement~3 in Proposition~\ref{t_eig} and Lemma~\ref{l_spec1} therefore imply that $h_{\kappa,\vartheta}$ is equal to $L_q(X^{\mathfrak t_{\kappa,\vartheta}}\cap \mathcal D^r_{q_\kappa})$. By statement~1 in Proposition~\ref{p_bound} and formula~(\ref{hkappa}), we conclude that $h_{\kappa,\vartheta}$ is a self-adjoint extension of $h_\kappa$. In view of Lemma~\ref{l_triple} and~(\ref{Lqf}), we have
\begin{equation}\label{hkappavartheta}
h_{\kappa,\vartheta} = L_{q_\kappa}^{u^\kappa_\vartheta(0)}.
\end{equation}
By~(\ref{ukappavartheta}) and~(\ref{wruw}), every real $f\in \mathcal D$ satisfying $l_{q_\kappa}f=0$ is proportional to $u^\kappa_\vartheta(0)$ for some $\vartheta\in\R$. By Lemma~\ref{l_lqf} and formulas~(\ref{hkappa}) and~(\ref{hkappavartheta}), it follows that every self-adjoint extension of $h_\kappa$ is equal to $h_{\kappa,\vartheta}$ for some $\vartheta\in\R$. Let $\vartheta,\vartheta'\in\R$. By Lemma~\ref{l_lqf} and~(\ref{hkappavartheta}), we have $h_{\kappa,\vartheta}=h_{\kappa,\vartheta'}$ if and only if $u^\kappa_\vartheta(0) = c u^\kappa_{\vartheta'}(0)$ for some real $c\neq 0$. In view of~(\ref{ukappavartheta}) and~(\ref{wruw}), the last condition holds if and only if $\vartheta-\vartheta'\in \pi\Z$.
\end{proof}

\begin{remark}
We note that the function $q_\kappa$ given by~(\ref{qkappa}) is real not only for real $\kappa$ but also for imaginary $\kappa$. A complete description of eigenfunction expansions in this case can be found in~\cite{GTV2010}. It is easy to see that $\mathfrak t_{\kappa,\vartheta}=(q_\kappa,\mathcal D^r_{q_\kappa},u^\kappa_\vartheta)$ remains an expansion triple for imaginary $\kappa$ and the spectral measure for $\mathfrak t_{\kappa,\vartheta}$ can again be calculated using formulas~(\ref{sigmakappavartheta}) and~(\ref{eee}). An analogue of Theorem~\ref{leig2} for imaginary $\kappa$ can thus be obtained.
\end{remark}

\section{Continuity of spectral measures}
\label{s4}

In this section, we prove Theorem~\ref{l_borel}.

Let the continuous function $\Phi$ on $(-1,1)\times \R_+$ be defined by setting
\begin{equation}\label{gkappaE0}
\Phi(\kappa,E) = -\frac{\log E}{\pi \sinc (\pi\kappa)}\sinc\left(\frac{i\kappa}{2}\log E\right),
\end{equation}
where the entire function $\sinc$ is defined by the equality
\begin{equation}\nonumber
\sinc\zeta = \left\{
\begin{matrix}
\zeta^{-1}\sin\zeta,& \zeta\in\C\setminus\{0\},\\
1,& \zeta=0.
\end{matrix}
\right.
\end{equation}
It follows that
\begin{equation}\label{gkappaE}
\Phi(\kappa,E) = \left\{
\begin{matrix}
-\log E/\pi,& \kappa=0,\\
(\sin\pi\kappa)^{-1}(E^{-\kappa/2}-E^{\kappa/2}),& 0<|\kappa|<1.
\end{matrix}
\right.
\end{equation}
For every $\vartheta\in\R$ and $-1<\kappa<1$, we define the function $t_{\kappa,\vartheta}$ on $\R_+$ by the formula
\begin{equation}\label{tkappavartheta}
t_{\kappa,\vartheta}(E) = 2 + \Phi(\kappa,E)^2(1-\cos 2\vartheta\cos\pi\kappa) + \Phi(\kappa,E) (E^{-\kappa/2}+E^{\kappa/2}) \sin 2\vartheta.
\end{equation}
It follows from~(\ref{tildeVkappatheta}), (\ref{tildeV0theta}), and~(\ref{gkappaE}) by a straightforward calculation that
\begin{equation}\label{dtilde}
d\tilde{\mathcal V}_{\kappa,\vartheta}(E) = t_{\kappa,\vartheta}(E)^{-1}\Theta(E)\,dE
\end{equation}
for all $\vartheta\in\R$ and $-1<\kappa<1$. By the Cauchy--Bunyakovsky inequality, we have
\[
|-c\cos 2\vartheta + d\sin 2\vartheta| \leq \sqrt{c^2+d^2}
\]
for any $c,d\in\R$. Applying this bound to $c = \Phi(\kappa,E)^2\cos\pi\kappa$ and
\[
d = \Phi(\kappa,E) (E^{-\kappa/2}+E^{\kappa/2}) = \Phi(\kappa,E)\sqrt{\Phi(\kappa,E)^2\sin^2\pi\kappa+4},
\]
from~(\ref{tkappavartheta}), we deduce that $t_{\kappa,\vartheta}(E)\geq f(\Phi(\kappa,E)^2)$, where $f(y) = 2+y-\sqrt{y^2+4y}$, $y\geq 0$. Because
\[
f(y) = \frac{4}{2+y+\sqrt{y^2+4y}}\geq \frac{2}{2+y},\quad y\geq 0,
\]
we conclude that $t_{\kappa,\vartheta}(E)^{-1}\leq 1 + \Phi(\kappa,E)^2/2$ for all $E>0$, $-1<\kappa<1$, and $\vartheta\in \R$. By~(\ref{gkappaE0}), the function $\kappa\to \Phi(\kappa,E)^2$ is even and increases on $[0,1)$ for every $E>0$. Let $0<\alpha<1$. In view of~(\ref{gkappaE}), it follows that
\begin{equation}\label{boundbound}
t_{\kappa,\vartheta}(E)^{-1}\leq 1 + \frac{1}{2}\Phi(\alpha,E)^2 \leq \frac{1}{2\sin^2\pi\alpha}(E^{\alpha}+E^{-\alpha})
\end{equation}
for all $E>0$, $\vartheta\in\R$, and $-\alpha\leq\kappa\leq\alpha$. Let $\varphi$ be a bounded Borel function on $\R$ with compact support and $B=(-1,1)\times\R$. Because the function $(\kappa,\vartheta)\to t_{\kappa,\vartheta}(E)^{-1}\varphi(E)$ is continuous on $B$ for every $E>0$, relations~(\ref{dtilde}) and~(\ref{boundbound}) and the dominated convergence theorem imply that $(\kappa,\vartheta)\to \int \varphi(E)\,d\tilde{\mathcal V}_{\kappa,\vartheta}(E)$ is a continuous function on $B$ that is bounded on $[-\alpha,\alpha]\times \R$ for every $0\leq \alpha<1$. Let
\[
B' = \{(\kappa,\vartheta)\in B: \vartheta\in (|\vartheta_\kappa|,\pi-|\vartheta_\kappa|)+\pi\Z\}.
\]
It follows from~(\ref{measVkappatheta}) and~(\ref{measV0theta}) that
\[
\int \varphi(E)\,d{\mathcal V}_{\kappa,\vartheta}(E) =\int \varphi(E)\,d\tilde{\mathcal V}_{\kappa,\vartheta}(E) + b_\varphi(\kappa,\vartheta),
\]
where the function $b_\varphi$ on $B$ is defined by the relation
\begin{equation}\label{b_F}
b_\varphi(\kappa,\vartheta) = \left\{
\begin{matrix}
\tilde \Phi(\kappa,|E_{\kappa,\vartheta}|)\varphi(E_{\kappa,\vartheta}),& (\kappa,\vartheta)\in B',\\
0,& (\kappa,\vartheta)\in B\setminus B',
\end{matrix}
\right.
\end{equation}
and the continuous function $\tilde \Phi$ on $(-1,1)\times \R_+$ is given by
\[
\tilde \Phi(\kappa,E) = \frac{1}{2} E\pi^2 \sinc(\pi\kappa) \left(\Phi(\kappa,E)^2+\frac{1}{\cos^2\vartheta_\kappa}\right).
\]
For every $(\kappa,\vartheta)\in B'$, we have $|\cot\vartheta\tan\vartheta_\kappa|<1$, and it follows from~(\ref{Ekappavartheta}) and (\ref{E0vartheta}) that
\[
E_{\kappa,\vartheta} = -\exp\left[\frac{\pi\cot\vartheta}{2\cos\vartheta_\kappa}\sinc(\vartheta_\kappa)g(\cot\vartheta\tan\vartheta_\kappa) \right],
\]
where $g$ is a continuous function on $(-1,1)$ such that $g(y)=y^{-1}\log((1+y)(1-y)^{-1})$ for $y\neq 0$ and $g(0)=2$. Hence, $(\kappa,\vartheta)\to E_{\kappa,\vartheta}$ is a continuous function on $B'$, and $b_\varphi$ is therefore a Borel function on $B$. Estimating $\Phi(\kappa,E)^2$ as above, we obtain
\begin{equation}\label{estest}
\tilde \Phi(\kappa,E) \leq \frac{\pi^2E}{2\sin^2\pi\alpha}(E^\alpha + E^{-\alpha}),\quad (\kappa,E)\in [-\alpha,\alpha]\times\R_+,
\end{equation}
for every $0<\alpha<1$. In view of~(\ref{b_F}), this implies that $b_\varphi$ is bounded on $[-\alpha,\alpha]\times\R$ for every $0\leq\alpha<1$. To complete the proof, it remains to show that $b_\varphi$ is continuous on $B$ if $\varphi$ is continuous. Let $-1<\kappa<1$. It follows from~(\ref{Ekappavartheta}) and (\ref{E0vartheta}) that $|E_{\kappa,\vartheta}|$ strictly decreases from $\infty$ to $0$ as $\vartheta$ varies from $|\vartheta_\kappa|$ to $\pi-|\vartheta_\kappa|$. Hence, for every $E>0$, there is a unique $\tau_E(\kappa)\in (|\vartheta_\kappa|,\pi-|\vartheta_\kappa|)$ such that $|E_{\kappa,\tau_E(\kappa)}|=E$. The continuity of $E_{\kappa,\vartheta}$ in $(\kappa,\vartheta)$ implies that $\tau_E$ is a continuous function on $(-1,1)$ for every $E>0$. Let $\beta>0$ be such that $\varphi(E)=0$ for every $E\leq -\beta$. Given $0<\alpha<1$ and $0<\delta$, we define the open subset $B_{\alpha,\delta}$ of $B$ by setting
\[
B_{\alpha,\delta} = \{(\kappa,\vartheta)\in (-\alpha,\alpha)\times \R : \vartheta\in (\tau_\delta(\kappa)-\pi, \tau_\beta(\kappa))+\pi\Z\}.
\]
If $\delta<1$, then it follows from~(\ref{b_F}) and~(\ref{estest}) that
\[
|b_\varphi(\kappa,\vartheta)| \leq \frac{\pi^2\delta^{1-\alpha}}{\sin^2\pi\alpha}\sup_{E\in\R} |\varphi(E)|,\quad (\kappa,\vartheta)\in B_{\alpha,\delta}.
\]
Given $(\kappa,\vartheta)\in B\setminus B'$ and $\varepsilon>0$, we pick an arbitrary $\alpha\in (|\kappa|,1)$ and choose $\delta>0$ so small that the right-hand side of the last inequality is less than $\varepsilon$. Then $B_{\alpha,\delta}$ is a neighborhood of $(\kappa,\vartheta)$ where the absolute value of $b_\varphi$ is less than $\varepsilon$. This proves that $b_\varphi$ is continuous at every point of $B\setminus B'$. Because $b_\varphi$ is obviously continuous on $B'$, the theorem is proved.

\appendix
\section{Proof of Lemma~\ref{l_analyt}}
\label{appA}
Let $\mathrm{Log}$ be the branch of the logarithm in $\C_\pi$ satisfying $\mathrm{Log}\,1 = 0$ and $p$ be the analytic function in $\C\times \C_\pi$ defined by the relation $p(\kappa,r) = e^{\kappa\,\mathrm{Log}\,r}$ (hence $p(\kappa,r) = r^\kappa$ for $r\in\R_+$). Let $G$ be the analytic function in $\C\times\C\times \C_\pi$ such that $G(\kappa,z,r) = p(1/2+\kappa,r)\mathcal X_\kappa(r^2z)$ for all $\kappa,z\in\C$ and $r\in \C_\pi$. We then have $G(\kappa,z,r) = u^\kappa(z|r)$ for all $\kappa,z\in\C$ and $r\in\R_+$. We define the function $F$ on $\mathscr O\times\C\times\C\times \C_\pi$ by setting
\begin{align}
& F(\kappa,\vartheta,z,r) = \frac{G(\kappa,z,r)\sin(\vartheta+\vartheta_\kappa) - G(-\kappa,z,r)\sin(\vartheta-\vartheta_\kappa)}{\sin\pi\kappa},\quad\kappa\in \mathscr O\setminus\{0\},\nonumber\\
& F(0,\vartheta,z,r) = G(0,z,r)\cos\vartheta+\frac{2}{\pi}\left[\left(\mathrm{Log}\,\frac{r}{2} + \gamma\right)G(0,z,r) - p(1/2,r)\,\mathcal Y(z r^2)\right]\sin\vartheta \nonumber
\end{align}
for every $z,\vartheta\in\C$ and $r\in \C_\pi$. It follows immediately from~(\ref{wkappa}), (\ref{w(z)}), and the definition of $F$ that $F(\kappa,\vartheta,z,r) = u^\kappa_\vartheta(z|r)$ for every $\vartheta,z\in\C$, $\kappa\in\mathscr O$, and $r\in\R_+$. The function $(\vartheta,z,r)\to F(\kappa,\vartheta,z,r)$ is obviously analytic in $\C\times\C\times \C_\pi$ for every fixed $\kappa\in\mathscr O$. The function $\kappa\to F(\kappa,\vartheta,z,r)$ is analytic in $\mathscr O\setminus \{0\}$ and continuous at $\kappa = 0$ (this is ensured by the same calculation as used to find the limit in~(\ref{w(z)})) and is therefore analytic in $\mathscr O$ for every fixed $\vartheta,z\in\C$ and $r\in \C_\pi$. Hence, $F$ is analytic in $\mathscr O\times\C\times\C\times \C_\pi$ by the Hartogs theorem.

\section{Proof of Lemma~\ref{ll}}
\label{app}
Let $T=L_q(C^\infty_0(a,b))$. Because $L_q$ is closed, it suffices to show that $T^* = L_q^*$. For this, we only need to prove that $D_{T^*}\subset D_{L_q^*}$ because $L_q$ is an extension of $T$ and $T^*$ is hence an extension of $L_q^*$. By~(\ref{lq*}), $D_{L_q^*}$ consists of all elements $[g]$ with $g\in \mathcal D_q$. Therefore, for every $\phi\in D_{T^*}$, we must find $g\in \mathcal D_q$ such that $\phi = [g]$.
Let $\psi = T^*\phi$. We then have
\[
\langle T[f],\phi\rangle = \langle [f],\psi\rangle,\quad f\in C_0^\infty(a,b).
\]
Because $(T[f])(r) = -f''(r)+q(r)f(r)$ for $\lambda$-a.e. $r\in(a,b)$, we obtain
\[
-\int_a^b \overline{f''(r)}\phi(r)\,dr = \int_a^b \overline{f(r)}(\psi(r)-q(r)\phi(r))\,dr,\quad f\in C_0^\infty(a,b).
\]
Because both $q$ and $\phi$ are locally square-integrable on $(a,b)$, the function $q\phi$ is locally integrable on $(a,b)$. We choose $c\in (a,b)$ and define $h\in \mathcal D$ by setting
\[
h(r) = \int_c^r d\rho\int_c^{\rho}(\psi(t)-q(t)\phi(t))\,dt.
\]
We obviously have $h''(r) = \psi(r)-q(r)\phi(r)$ for $\lambda$-a.e. $r\in (a,b)$. Integrating by parts, we obtain
\[
\int_a^b \overline{f(r)}(\psi(r)-q(r)\phi(r))\,dr = \int_a^b \overline{f''(r)}h(r)\,dr
\]
and therefore
\[
\int_a^b \overline{f''(r)}(\phi(r)+h(r))\,dr = 0,\quad f\in C_0^\infty(a,b).
\]
This means that the second derivative of $\phi + h$ in the sense of generalized functions is equal to zero. Hence, there are $A,B\in\C$ such that $\phi(r) + h(r) = Ar+B$ for $\lambda$-a.e. $r\in (a,b)$. Let $g\in \mathcal D$ be defined by the relation $g(r) = Ar+B-h(r)$, $r\in (a,b)$. Then we obviously have $[g] = \phi$. Because $g''(r) = -\psi(r) + q(r)\phi(r)$, we have $(l_q g)(r) = \psi(r) - q(r)\phi(r) + q(r)g(r) = \psi(r)$ for $\lambda$-a.e. $r\in(a,b)$ and therefore $l_qg = \psi$. This implies that both $g$ and $l_qg$ are square-integrable and hence $g\in \mathcal D_q$.

\section*{Acknowledgments}
The author is grateful to I.V.~Tyutin and B.L.~Voronov for the useful discussions.

\bibliographystyle{unsrt}
\bibliography{isquare}
\end{document}